\newtheorem{theorem}{Theorem}
\newtheorem{proposition}{Proposition}
\newtheorem{corollary}{Corollary}
\begin{document}
	\title{Sharp Finite Statistics for Quantum Key Distribution}

\author{Vaisakh Mannalath$^{1,2,3}$  }
\thanks{These authors contributed equally to this Letter. \\ vmannalath@vqcc.uvigo.es \\ vzapatero@vqcc.uvigo.es}	
         
	\author{Víctor Zapatero$^{1,2,3}$}
           \thanks{These authors contributed equally to this Letter. \\ vmannalath@vqcc.uvigo.es \\ vzapatero@vqcc.uvigo.es}
           
        \author{Marcos Curty$^{1,2,3}$}
        \affiliation{$^1$Vigo Quantum Communication Center, University of Vigo, Vigo E-36310, Spain}
        \affiliation{$^2$Escuela de Ingeniería de Telecomunicación, Department of Signal Theory and Communications, University of Vigo, Vigo E-36310, Spain}
        \affiliation{$^3$AtlanTTic Research Center, University of Vigo, Vigo E-36310, Spain}        

\begin{abstract}
The performance of quantum key distribution (QKD) heavily depends on statistical inference. For a broad class of protocols, the central statistical task is a random sampling problem, customarily addressed using a hypergeometric tail bound due to Serfling. Here, we provide an alternative solution for this task of unprecedented tightness among QKD security analyses. As a by-product, confidence intervals for the average of nonidentical Bernoulli parameters follow too. These naturally fit in statistical analyses of decoy-state QKD and also outperform standard tools. Last, we show that, in a vast parameter regime, the use of tail bounds is not enforced because the cumulative mass function of the hypergeometric distribution is accurately computable. This sharply decreases the minimum block sizes necessary for QKD, and reveals the tightness of our analytical bounds when moderate-to-large blocks are considered.
\end{abstract}
\maketitle
\textit{Introduction}---Quantum key distribution (QKD) is the only approach to key exchange that is provably secure against computationally unbounded adversaries~\cite{PortmannRenner}. Today, considerable efforts are being made to boost the performance of practical QKD systems while guaranteeing their physical-layer security~\cite{Feihu,review}. Among the multiple facets of this ongoing program, it is of utmost importance to achieve tighter lower bounds on the extractable secret key length, a goal that can be tackled by refining QKD security proofs. These proofs build upon quantum information-theoretic results and strongly rely on statistical inference, which plays a pivotal role to keep the necessary data block sizes to a minimum.

To be precise, establishing the finite-key security of a QKD protocol requires to upper bound the probability of a ``failure" happening in its parameter estimation (PE) step. This refers to the event where the protocol does not abort but the adversary obtains nontrivial information about the final key~\cite{PortmannRenner}. In BB84-like protocols~\cite{BB84}, a very popular proof technique is based on an entropic uncertainty relation (EUR) for smooth entropies~\cite{EUR,EUR_review}, in which case the simplest formulation of the problem goes as follows. Let us consider a two-piece experiment where a population of $N$ binary random variables is generated according to an arbitrary probability distribution, and a random sample of size $n$ is drawn from the population afterward. If we denote the frequency of ones in the test (complementary) sample as $\hat{p}$ ($\hat{q}$), the goal is to find a threshold value $q^{\rm th}\geq{}p^{\rm th}$ such that
\begin{equation}\label{goal}
\Pr\Bigl[\hat{p}\leq{}p^{\rm th},\hat{q}\geq{}q^{\rm th}\Bigr]\leq{}\epsilon,
\end{equation}
for arbitrary values of $N$, $n$, $\epsilon$ and $p^{\rm th}$. Naturally, an interplay exists between $\epsilon$ and $q^{\rm th}$: fulfilling Eq.~(\ref{goal}) with a very small $\epsilon$ comes at the price of tolerating larger values of $q^{\rm th}$. In the context of QKD---in which $\hat{q}$ is often referred to as the phase-error rate---this translates into a direct penalty on the provably secure key length. Notably, the above formulation of the problem shows up exactly in ideal implementations of the BB84~\cite{BB84} or the BBM92~\cite{BBM92,Leverrier} protocols, and we adhere to the latter in the main text for didactic purposes. The detailed protocol description and secret key length formula can be found in Appendix I. On the other hand, the popular decoy-state BB84 protocol~\cite{decoy2,decoy,decoy3,Lim}, which is implemented in most of today's commercial QKD setups, is addressed in Appendix II.

In EUR-based security proofs~\cite{Leverrier,Tomamichel}, it is commonplace to derive the desired $q^{\rm th}$ by resorting to an exponential tail bound on the hypergeometric (HG) distribution known as Serfling inequality~\cite{Serfling}. Recently though, a sharper bound has been derived in~\cite{Ekert}, combining Serfling inequality with a refined inequality due to Hush and Scovel~\cite{Hush}. Nonetheless, the reported improvement is moderate and heavily relies on the use of very large samples for testing, an undesirable feature given the fact that the final key is extracted from the complementary data.

In this Letter, we propose a simpler approach to calculate $q^{\rm th}$, resorting to a well-known link between random sampling with and without replacement~\cite{Hoeffding}. Remarkably, the bound we obtain outperforms~\cite{Ekert} in all the explored parameter regimes relevant for QKD, where one restricts $p^{\rm th}$ to a few percent at most, $N$ to thousands or tens of thousands at least, and $\epsilon$ to rather minuscule values (often below $10^{-10}$).
Furthermore, in the derivation of the bound, confidence intervals (CIs) are obtained for the average parameter of a set of independent Bernoulli trials. These intervals, which provide an additional tool required for current decoy-state QKD security proofs, also outperform the analytical results considered for that purpose so far~\cite{Lim,Zhang}. Last, we show that, contrary to general belief, the cumulative mass function (CMF) of the HG distribution is exactly computable in a vast parameter regime of interest to QKD. Importantly, this enables the calculation of optimal~\cite{Buonaccorsi,Wright,Weizhen} 
confidence upper bounds for the population parameter following the Clopper-Pearson method~\cite{Clopper,optimality}. Applying the latter to the random sampling problem of Eq.~(\ref{goal}) ---and also to its generalization to decoy-state QKD schemes---we report a dramatic decrease of the minimum block sizes necessary for key extraction. On the other hand, explicit comparison with our simple analytical tools reveals the tightness of the latter for moderate-to-large blocks.\\

\textit{Analytical results}---Our analytical results are triggered by two observations. First, the additive Chernoff bound~\cite{Chernoff} (simply Chernoff bound in what follows), conventionally used in the context of sampling with replacement, also holds for sampling without replacement. This feature, which is a consequence of a general theorem by Hoeffding~\cite{Hoeffding}---see also~\cite{Chvatal} for a simple proof--- has already been exploited in QKD security analyses before~\cite{Hayashi_1,Hayashi_2}. Secondly, Serfling inequality---the standard bound used in sampling without replacement---essentially reduces to Hoeffding inequality~\cite{Hoeffding} for small sampling fractions $n/N$. Nevertheless, the latter constitutes a loose relaxation of the Chernoff bound, revealing that this bound must certainly be better for small $n/N$. With this in mind, we derive a tight relaxation of the Chernoff bound and indistinctly use it to obtain one-sided CIs for the population parameters in sampling with and without replacement. Then, we apply these CIs to solve Eq.~(\ref{goal}) and its generalization to decoy-state QKD.

Crucially, our relaxation stems from a rational approximation of the Kullback-Leibler divergence, which is based on a logarithmic inequality~\cite{Topsoe} carefully selected to keep the CIs tight and handy at the same time. Indeed, higher-order rational approximations of the logarithm lead to much more convoluted formulas for the divergence. The reader is referred to Appendix III for the relaxed Chernoff bound, and the corresponding CIs are stated in Proposition 1 and Corollary 1 below.

\begin{proposition} Let $\hat{p}$ be the average of $n$ independent Bernoulli variables, with expected value $\mathbb{E}\left[\hat{p}\right]=p$, and let $\kappa_{n,\epsilon}=(2/9n)\ln(1/\epsilon)$. Then, for all $\epsilon>0$, $\Pr[\Gamma^-_{n,\epsilon}(\hat{p})\geq{}p]\leq{\epsilon}$ and $\Pr[\Gamma^+_{n,\epsilon}(\hat{p})\leq{}p]\leq{\epsilon}$, where
\begin{equation}\label{Gamma-}
\Gamma^-_{n,\epsilon}(\hat{p})=\left\{
\begin{array}{ll}
\gamma^-_{n,\epsilon}(\hat{p}) & \mathrm{if}\hspace{.2cm}\hat{p}\in{}\left[\displaystyle{\frac{3\kappa_{n,\epsilon}}{1+\kappa_{n,\epsilon}}},1\right], \\
-\epsilon & \mathrm{otherwise},
\end{array} 
\right.
\end{equation}
\begin{equation}\label{Gamma+}
\Gamma^+_{n,\epsilon}(\hat{p})=\left\{
\begin{array}{ll}
\gamma^+_{n,\epsilon}(\hat{p}) & \mathrm{if}\hspace{.2cm}\hat{p}\in\left[0,\displaystyle{\frac{1-2\kappa_{n,\epsilon}}{1+\kappa_{n,\epsilon}}}\right], \\
1+\epsilon & \mathrm{otherwise},
\end{array} 
\right.
\end{equation}
and
\begin{eqnarray}\label{U}
\gamma^\pm_{n,\epsilon}(x)&&=\frac{1}{1+4\kappa_{n,\epsilon}}\biggl[3\kappa_{n,\epsilon}+(1-2\kappa_{n,\epsilon})x\nonumber \\
&&\pm3\sqrt{\kappa_{n,\epsilon}\left(\kappa_{n,\epsilon}+x-x^{2}\right)}\biggr].
\end{eqnarray}
\end{proposition}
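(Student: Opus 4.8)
The plan is to obtain both one-sided intervals by inverting the additive Chernoff bound, using the rational relaxation of the binary relative entropy that underlies the ``relaxed Chernoff bound'' of Appendix III, and to treat $\gamma^-$ and $\gamma^+$ by a single symmetry argument. First I recall the ingredient. Writing $D(a\|p)=a\ln(a/p)+(1-a)\ln[(1-a)/(1-p)]$ for the binary Kullback--Leibler divergence, Hoeffding's theorem gives, for the average $\hat p$ of $n$ independent (possibly nonidentical) Bernoulli variables with mean $p$, the tail bounds $\Pr[\hat p\ge a]\le e^{-nD(a\|p)}$ when $a\ge p$ and $\Pr[\hat p\le a]\le e^{-nD(a\|p)}$ when $a\le p$. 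The relaxed Chernoff bound replaces $D$ by a rational lower bound; the candidate that reproduces the stated formulas is
\begin{equation*}
\tilde D(a\|p)=\frac{9\,(a-p)^2}{2\,(a+2p)\,(3-a-2p)}\le D(a\|p),
\end{equation*}
which agrees with $D$ to second order at $a=p$, so that both tails are at most $e^{-n\tilde D(a\|p)}$.

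Second, I invert. Setting $\tilde D(\hat p\|p)=\tfrac1n\ln(1/\epsilon)$, using $\kappa_{n,\epsilon}=\tfrac{2}{9n}\ln(1/\epsilon)$ so that $\tfrac1n\ln(1/\epsilon)=\tfrac92\kappa_{n,\epsilon}$, and clearing the denominator $(a+2p)(3-a-2p)$, the relation becomes the quadratic in $p$
\begin{equation*}
(1+4\kappa_{n,\epsilon})\,p^2-2\bigl[3\kappa_{n,\epsilon}+(1-2\kappa_{n,\epsilon})\hat p\bigr]p+\hat p\bigl[(1+\kappa_{n,\epsilon})\hat p-3\kappa_{n,\epsilon}\bigr]=0,
\end{equation*}
whose discriminant equals $9\kappa_{n,\epsilon}(\kappa_{n,\epsilon}+\hat p-\hat p^2)$ and whose two roots are exactly $\gamma^\pm_{n,\epsilon}(\hat p)$. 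I then show $\gamma^-_{n,\epsilon}$ is increasing in $\hat p$, so the event $\{\gamma^-_{n,\epsilon}(\hat p)\ge p\}$ equals $\{\hat p\ge a^\ast\}$, where $a^\ast\ge p$ is the observation with $\gamma^-_{n,\epsilon}(a^\ast)=p$ and hence $\tilde D(a^\ast\|p)=\tfrac1n\ln(1/\epsilon)$; the upper-tail bound then gives $\Pr[\gamma^-_{n,\epsilon}(\hat p)\ge p]\le e^{-n\tilde D(a^\ast\|p)}=\epsilon$. For $\Gamma^+$ I use the identity $\gamma^-_{n,\epsilon}(1-x)=1-\gamma^+_{n,\epsilon}(x)$ (immediate from $\kappa_{n,\epsilon}+x-x^2$ being invariant under $x\mapsto1-x$): applying the lower-bound result to the complement $1-\hat p$, whose mean is $1-p$, turns $\Pr[\gamma^-_{n,\epsilon}(1-\hat p)\ge1-p]\le\epsilon$ into $\Pr[\gamma^+_{n,\epsilon}(\hat p)\le p]\le\epsilon$, and maps the threshold $3\kappa_{n,\epsilon}/(1+\kappa_{n,\epsilon})$ to $(1-2\kappa_{n,\epsilon})/(1+\kappa_{n,\epsilon})$.

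Third, I dispose of the clamped branches. A short computation gives $\gamma^-_{n,\epsilon}\bigl(3\kappa_{n,\epsilon}/(1+\kappa_{n,\epsilon})\bigr)=0$ and, by the same symmetry, $\gamma^+_{n,\epsilon}\bigl((1-2\kappa_{n,\epsilon})/(1+\kappa_{n,\epsilon})\bigr)=1$. Below (resp. above) these values the relevant root leaves $[0,1]$ and the confidence constraint is vacuous, so I set $\Gamma^-_{n,\epsilon}=-\epsilon<0\le p$ and $\Gamma^+_{n,\epsilon}=1+\epsilon>1\ge p$, which force the corresponding probabilities to be zero; the strict choices $-\epsilon$ and $1+\epsilon$ (rather than $0$ and $1$) secure validity even in the degenerate cases $p\in\{0,1\}$. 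Note that real-rootedness needs no hypothesis, since $\kappa_{n,\epsilon}+\hat p-\hat p^2\ge\kappa_{n,\epsilon}>0$ throughout $[0,1]$; the domain conditions serve only to keep the selected root inside $[0,1]$.

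The hard part will be the inequality $\tilde D\le D$ that makes the relaxed Chernoff bound valid. I would avoid bounding $D$ term by term, since its two logarithms pull in opposite directions; instead I would keep the Chernoff parameter free. Because $\Pr[\hat p\ge a]\le e^{-n[sa-\ln(1-p+pe^{s})]}$ for every $s\ge0$, it suffices to exhibit one closed-form $s$ for which the carefully selected Topsøe logarithmic inequality upper bounds $\ln(1-p+pe^{s})$ by $sa-\tilde D(a\|p)$; validity of the resulting exponent is then automatic. The remaining work is the elementary but tedious verification that this particular rational bound on the log-moment-generating function holds for all $0\le p\le a\le1$, together with confirming monotonicity of $\gamma^\pm_{n,\epsilon}$ in $\hat p$ and the positivity of the denominator $(a+2p)(3-a-2p)$ on the claimed domains.
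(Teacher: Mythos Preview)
Your overall architecture---relax the KL divergence to $\tilde D(a\|p)=9(a-p)^2/[2(a+2p)(3-a-2p)]$, invert the level set $\tilde D=\tfrac1n\ln(1/\epsilon)$, and use monotonicity to turn the tail bound into a confidence bound---matches the paper's, and your direct inversion as a quadratic in $p$ (rather than first solving for the threshold $z_{n,\epsilon}(p)$ and then inverting that map, as the paper does) is a neat shortcut. The symmetry identity $\gamma^-_{n,\epsilon}(1-x)=1-\gamma^+_{n,\epsilon}(x)$ you use to deduce the $\Gamma^+$ claim from the $\Gamma^-$ claim is also a genuine simplification over the paper, which instead repeats the whole argument starting from the \emph{lower} Chernoff tail.

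The one place where your proposal diverges substantively---and where it is incomplete---is the proof of $\tilde D\le D$. You dismiss the term-by-term route because ``the two logarithms pull in opposite directions'' and instead propose to work at the level of the moment-generating function, exhibiting a closed-form $s$ and bounding $\ln(1-p+pe^{s})$. But Tops{\o}e's inequality is a \emph{lower} bound on $\ln(1+x)$, so it does not give the upper bound on $\ln(1-p+pe^{s})$ that your plan requires; and in any case optimising over $s$ just returns you to $D(a\|p)\ge\tilde D(a\|p)$. The paper shows that the term-by-term route you avoid actually works cleanly: applying $\ln(1+x)\ge x(6+5x)/[2(1+x)(3+x)]$ separately to $z\ln(z/p)$ and $(1-z)\ln[(1-z)/(1-p)]$ and summing gives, after elementary algebra, exactly $9(z-p)^2/[2(z+2p)(3-z-2p)]$. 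The ``opposite directions'' worry is not an obstacle, because a lower bound on each summand is still a lower bound on the sum; the pleasant surprise is that the two rational pieces combine into the single compact fraction. So your proof becomes complete once you replace your MGF detour by this direct substitution.
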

A proof of this claim is given in Appendix III. Also, the simpler claims
$\Pr\left[\max\{0,\gamma^{-}_{n,\epsilon}(\hat{p})\}>p\right]\leq{}\epsilon$ and $\Pr\left[\min\{1,\gamma^{+}_{n,\epsilon}(\hat{p})\}<p\right]\leq{}\epsilon$ follow identically~\cite{justification}.

Since Proposition 1 is solely derived from the Chernoff bound, it automatically extends to a HG random variable~\cite{Hoeffding,Chvatal}. The upper extension goes as follows.

\begin{corollary}
Let $\hat{X}\sim{}Hypergeometric(N,K,n)$ and $\hat{p}=\hat{X}/n$, such that $\mathbb{E}\left[\hat{p}\right]=p$ with $p=K/N$. Then, for all $\epsilon>0$, $\Pr[\Gamma^+_{n,\epsilon}(\hat{p})\leq{}p]\leq{\epsilon}$, where $\Gamma^+_{n,\epsilon}(x)$ is defined in Proposition 1.
\end{corollary}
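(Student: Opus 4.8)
The plan is to leverage the fact, already flagged in the excerpt, that the additive Chernoff bound underlying Proposition~1 holds verbatim for sampling without replacement, so that every consequence of that bound---including the one-sided interval $\Gamma^+_{n,\epsilon}$---transfers directly to the hypergeometric setting. Concretely, I would first recall that a variable $\hat{X}\sim{}Hypergeometric(N,K,n)$ is precisely the number of successes when $n$ items are drawn \emph{without} replacement from a population of $N$ items containing $K$ successes, whereas the sum of independent Bernoulli trials in Proposition~1 corresponds to drawing \emph{with} replacement from the same population. Writing $\hat{X}=\sum_{i=1}^{n}X_i$ with $X_i$ the success indicators of the successive draws, the per-draw mean is $K/N=p$ in both schemes, so $\mathbb{E}[\hat{p}]=p$ holds as required.

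The engine is Hoeffding's comparison theorem: for any continuous convex function $f$, the expectation $\mathbb{E}[f(\sum_i X_i)]$ under sampling without replacement is dominated by $\mathbb{E}[f(\sum_i Y_i)]$ under sampling with replacement, where the $Y_i$ are i.i.d.\ Bernoulli$(p)$. I would apply this with $f(x)=e^{-tx}$ for $t>0$, the very convex exponential whose expectation yields the lower-tail Chernoff bound. Because the coverage-failure event $\{\Gamma^+_{n,\epsilon}(\hat{p})\leq{}p\}$ is a lower-tail event for $\hat{p}$---it occurs when the sample underestimates $p$---the relevant moment generating function of the hypergeometric variable is bounded above by that of the binomial one, and hence the exact same Chernoff inequality, and the exact same rational relaxation of the Kullback-Leibler divergence used to derive $\gamma^+_{n,\epsilon}$, applies unchanged.

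Chaining these observations, every inequality in the $\Gamma^+$ part of the proof of Proposition~1 remains valid when $\hat{p}=\hat{X}/n$ is hypergeometric, so the conclusion $\Pr[\Gamma^+_{n,\epsilon}(\hat{p})\leq{}p]\leq{}\epsilon$ follows at once. The point requiring care is not computational but conceptual: one must verify that the derivation of $\Gamma^+_{n,\epsilon}$ in Appendix~III invokes \emph{only} the Chernoff bound and no property peculiar to independent trials, so that no step is invalidated by the switch to without-replacement sampling. Once that is confirmed, the statement is an immediate corollary, as the phrase ``the upper extension goes as follows'' anticipates; the lower interval $\Gamma^-$ would transfer by the identical argument with $f(x)=e^{tx}$.
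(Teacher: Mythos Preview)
Your proposal is correct and follows the same route the paper takes: the paper simply states that Proposition~1 is derived solely from the additive Chernoff bound, which by Hoeffding's convexity comparison (or Chv\'atal's direct argument) holds verbatim for the hypergeometric distribution, so the confidence bound $\Gamma^{+}_{n,\epsilon}$ carries over automatically. You have spelled out the mechanism---applying Hoeffding's theorem with the convex exponential $f(x)=e^{-tx}$ to transfer the lower-tail bound---in more detail than the paper does, but the underlying idea is identical.
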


This result readily implies the following proposition.

\begin{proposition} A test sample of size $n$ is drawn at random from a binary population with $N$ elements. Let $\hat{p}$ ($\hat{q}$) denote the frequency of ones in the test (complementary) sample. For all $\epsilon>0$,
\begin{equation}
\Pr\left[\hat{q}\geq{}q^{\rm th}_{N,n,\epsilon}(\hat{p})\right]\leq{}\epsilon
\end{equation}
for
\begin{equation}\label{q_th}
q^{\rm th}_{N,n,\epsilon}(x)=\frac{N\Gamma^+_{n,\epsilon}(x)-nx}{N-n},
\end{equation}
where $\Gamma^+_{n,\epsilon}(x)$ is defined in Proposition 1.
\end{proposition}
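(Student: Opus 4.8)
The plan is to reduce the claim to Corollary 1 through a purely deterministic bookkeeping identity, so that no fresh probabilistic estimate is required. First I would fix the population once and for all: it is a binary string of $N$ entries containing some number $K$ of ones, and I set $p=K/N$, a constant determined by the population rather than by the draw. Drawing the test sample of size $n$ uniformly at random then makes the one-count $n\hat{p}$ a $Hypergeometric(N,K,n)$ variable with $\mathbb{E}[\hat{p}]=p$, which is precisely the setting of Corollary 1; all the randomness lives in the choice of which $n$ elements form the test sample.

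Next I would record the conservation of ones between the two complementary sub-samples. Since the test sample and its complement partition the population, their one-counts must sum to $K$, i.e. $n\hat{p}+(N-n)\hat{q}=K=Np$. Solving for the complementary frequency yields the deterministic relation $\hat{q}=(Np-n\hat{p})/(N-n)$, valid for every realization of the draw (the denominator never vanishes since $N>n$). This identity is the only ingredient the argument needs beyond Corollary 1.

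The key step is then an equivalence of events. Substituting this expression for $\hat{q}$ together with the definition $q^{\rm th}_{N,n,\epsilon}(\hat{p})=(N\Gamma^+_{n,\epsilon}(\hat{p})-n\hat{p})/(N-n)$ into the inequality $\hat{q}\geq{}q^{\rm th}_{N,n,\epsilon}(\hat{p})$ and multiplying through by the positive factor $N-n$, the common term $-n\hat{p}$ cancels and the inequality collapses to $Np\geq{}N\Gamma^+_{n,\epsilon}(\hat{p})$, i.e. to $\Gamma^+_{n,\epsilon}(\hat{p})\leq{}p$. Hence the events $\{\hat{q}\geq{}q^{\rm th}_{N,n,\epsilon}(\hat{p})\}$ and $\{\Gamma^+_{n,\epsilon}(\hat{p})\leq{}p\}$ coincide exactly, and applying Corollary 1 to the latter immediately gives $\Pr[\hat{q}\geq{}q^{\rm th}_{N,n,\epsilon}(\hat{p})]\leq{}\epsilon$.

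I do not anticipate a genuine technical obstacle, since all the probabilistic content has already been absorbed into Corollary 1; the remaining work is a cancellation. The single point deserving care is conceptual rather than computational: the threshold $q^{\rm th}_{N,n,\epsilon}(\hat{p})$ is a function of the observable $\hat{p}$ (and of the known $N$, $n$, $\epsilon$) and not of the unknown $p$, so the bound is operationally usable, yet the equivalence of events is phrased in terms of $p$, which enters only as a fixed but arbitrary constant. I would also note that the conclusion holds uniformly over all admissible values of $K$, precisely because Corollary 1 does.
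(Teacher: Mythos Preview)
Your argument is correct and matches the paper's own proof essentially step for step: the paper also fixes the population proportion $p$, notes that $n\hat{p}\sim\mathrm{Hypergeometric}(N,Np,n)$, invokes the conservation identity $p=n\hat{p}/N+(N-n)\hat{q}/N$, and concludes by rearranging and applying Corollary~1. Your write-up is simply more explicit about the event equivalence that the paper compresses into the phrase ``rearranging terms.''
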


\begin{proof}
Let $\hat{X}$ be the number of ones in the test sample, such that $\hat{p}=\hat{X}/n$. If we denote by $p$ the fraction of ones in the population, $\hat{X}\sim{}\mathrm{Hypergeometric}(N,Np,n)$ and the claim follows directly from Corollary 1 by substituting $p=n\hat{p}/N+(N-n)\hat{q}/N$ and rearranging terms.
\end{proof}

As shown next, Proposition 2 allows to obtain a threshold value $q^{\rm th}$ fulfilling Eq.~(\ref{goal}), which provides the so-called failure probability estimation of the BBM92 protocol.

\begin{proposition} Consider the two-piece experiment presented in the Introduction. Then, for all $\epsilon>0$ and $p^{\rm th}<1/2$,
\begin{equation}\label{upperbound_q}
\Pr\Bigl[\hat{p}\leq{}p^{\rm th},\hat{q}\geq{}q^{\rm th}_{N,n,\epsilon}\left(p^{\rm th}\right)\Bigr]\leq{}\epsilon
\end{equation}
for the threshold function $q^{\rm th}_{N,n,\epsilon}(x)$ defined in Eq.~(\ref{q_th}), given that the latter is locally nondecreasing at $p^{\rm th}$.
\end{proposition}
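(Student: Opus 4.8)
The plan is to deduce Proposition 3 from Proposition 2 by a simple event-inclusion argument, the whole content being a monotonicity property of the threshold function $q^{\rm th}_{N,n,\epsilon}$. Write $E_3=\{\hat p\le p^{\rm th},\ \hat q\ge q^{\rm th}_{N,n,\epsilon}(p^{\rm th})\}$ for the event in Eq.~(\ref{upperbound_q}) and $E_2=\{\hat q\ge q^{\rm th}_{N,n,\epsilon}(\hat p)\}$ for the event already controlled by Proposition 2. I would show $E_3\subseteq E_2$ and then invoke Proposition 2 to conclude $\Pr[E_3]\le\Pr[E_2]\le\epsilon$, which is precisely Eq.~(\ref{upperbound_q}).

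To establish the inclusion, fix an outcome in $E_3$, so that $\hat p\le p^{\rm th}$. If $q^{\rm th}_{N,n,\epsilon}$ is nondecreasing on the whole interval $[0,p^{\rm th}]$, then $q^{\rm th}_{N,n,\epsilon}(\hat p)\le q^{\rm th}_{N,n,\epsilon}(p^{\rm th})\le\hat q$, so the defining inequality of $E_2$ holds. Thus the argument reduces to upgrading the stated hypothesis ``$q^{\rm th}_{N,n,\epsilon}$ is locally nondecreasing at $p^{\rm th}$'' to ``$q^{\rm th}_{N,n,\epsilon}$ is nondecreasing throughout $[0,p^{\rm th}]$''. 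This upgrade is the step I expect to be the main obstacle, since local monotonicity at a single point is in general far weaker than monotonicity on an interval.

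I would overcome it through concavity. From Eq.~(\ref{q_th}) and $N>n$, one has $q^{\rm th}_{N,n,\epsilon}(x)=\tfrac{N}{N-n}\Gamma^+_{n,\epsilon}(x)-\tfrac{n}{N-n}x$, a positive multiple of $\Gamma^+_{n,\epsilon}$ minus a linear term, so $q^{\rm th}_{N,n,\epsilon}$ inherits the concavity of $\Gamma^+_{n,\epsilon}$. On the branch where $\Gamma^+_{n,\epsilon}=\gamma^+_{n,\epsilon}$, the only nonlinear contribution in Eq.~(\ref{U}) is $3\sqrt{\kappa_{n,\epsilon}(\kappa_{n,\epsilon}+x-x^2)}$; since $x\mapsto\kappa_{n,\epsilon}+x-x^2$ is concave and strictly positive on $[0,1]$ and $\sqrt{\cdot}$ is concave and increasing, this term is concave, whence $\gamma^+_{n,\epsilon}$ and $q^{\rm th}_{N,n,\epsilon}$ are concave there. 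I would also verify that the hypothesis $p^{\rm th}<1/2$ keeps the segment $[0,p^{\rm th}]$ inside this branch (equivalently $p^{\rm th}\le(1-2\kappa_{n,\epsilon})/(1+\kappa_{n,\epsilon})$ in the parameter regime of interest, cf. Eq.~(\ref{Gamma+})), so that the formula is valid throughout the interval.

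Concavity then finishes the upgrade: the derivative of a concave function is nonincreasing, so if it is nonnegative at $p^{\rm th}$ (the meaning of being locally nondecreasing there) it is nonnegative at every $x\le p^{\rm th}$, and therefore $q^{\rm th}_{N,n,\epsilon}$ is nondecreasing on all of $[0,p^{\rm th}]$. This closes $E_3\subseteq E_2$ and completes the proof. If one prefers to sidestep differentiability, the identical conclusion follows by partitioning $\{\hat p\le p^{\rm th}\}$ over the discrete values $v\in\{0,1/n,2/n,\dots\}$ attainable by $\hat p$ and using $q^{\rm th}_{N,n,\epsilon}(v)\le q^{\rm th}_{N,n,\epsilon}(p^{\rm th})$ for each such $v\le p^{\rm th}$, which again rests on the concavity argument above.
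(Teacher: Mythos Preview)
Your strategy matches the paper's: establish the event inclusion $E_3\subseteq E_2$ via monotonicity of $q^{\rm th}_{N,n,\epsilon}$ on $[0,p^{\rm th}]$, then apply Proposition~2. Your concavity argument for upgrading ``locally nondecreasing at $p^{\rm th}$'' to ``nondecreasing on $[0,p^{\rm th}]$'' is correct and in fact more explicit than the paper, which simply asserts this implication without spelling out why.

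There is, however, one step you skip that the paper treats explicitly. Proposition~2 is stated (and proved) for a \emph{fixed} binary population, whereas the two-piece experiment of the Introduction first generates the population according to an arbitrary distribution, so the total fraction of ones $\hat T$ is itself random. Your line ``invoke Proposition~2 to conclude $\Pr[E_2]\le\epsilon$'' therefore does not apply as written: you need to condition on $\hat T$ and average, namely $\Pr[E_2]=\sum_p\Pr[\hat T=p]\,\Pr[E_2\mid\hat T=p]\le\sum_p\Pr[\hat T=p]\,\epsilon=\epsilon$, where the inner bound is the one Proposition~2 actually provides. This is a one-line fix, but it is the reason the paper inserts the averaging over $\hat T$ as what it calls a ``purely \emph{ad hoc} step''; without it the invocation of Proposition~2 is formally incomplete.
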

\begin{proof}
Let $\hat{T}$ denote the frequency of ones occurring in the population. Then,
\begin{eqnarray}
&&\Pr\Bigl[\hat{p}\leq{}p^{\rm th},\hat{q}\geq{}q^{\rm th}_{N,n,\epsilon}\left(p^{\rm th}\right)\Bigr]\leq{}\Pr\left[\hat{q}\geq{}q^{\rm th}_{N,n,\epsilon}\left(\hat{p}\right)\right]\nonumber \\
&&\leq{}\sum_{p}\Pr\left[\hat{T}=p\right]\Pr\left[\hat{q}\geq{}q^{\rm th}_{N,n,\epsilon}\left(\hat{p}\right)\Big\rvert{}\hat{T}=p\right]\leq{}\epsilon.
\end{eqnarray}

In the first inequality we use the fact that $\{\hat{p}\leq{}p^{\rm th},\hat{q}\geq{}q^{\rm th}_{N,n,\epsilon}\left(p^{\rm th}\right)\}\implies{}\{\hat{q}\geq{}q^{\rm th}_{N,n,\epsilon}\left(\hat{p}\right)\}$. This is so because $q^{\rm th}_{N,n,\epsilon}(x)$ is nondecreasing for $x\in[0,p^{\rm th}]$ if it is nondecreasing at $x=p^{\rm th}$, and the latter is guaranteed by assumption. The averaging over $\hat{T}$ in the second inequality is a purely \emph{ad hoc} step to enable the use of Proposition 2, which presumes a fixed ratio of ones in the population. The third inequality follows from Proposition 2.
\end{proof}

A relevant observation is that the condition on the slope at $p^{\rm th}$, which is equivalent to $\Gamma^{+'}_{n,\epsilon}(p^{\rm th})\geq{}n/N$ [$\Gamma^{+'}_{n,\epsilon}(p^{\rm th})$ denoting the derivative of $\Gamma^+_{n,\epsilon}(x)$ with respect to $x$ evaluated in $p^{\rm th}$], is unrestrictive for all practical purposes~\cite{comment_2}.

On another note, the reader is referred to Appendix II for the failure probability estimation of a decoy-state QKD protocol, which relies not only on Proposition 2 but also on Proposition 1.\\

\textit{Numerical results}---Certainly, tighter results can be obtained by resorting to numerical methods. For instance, using Stirling's formula and the Gaussian cumulative density function, the authors of~\cite{Hayashi_1} construct a HG tail inequality that outperforms our relaxed Chernoff bound in certain regimes, but does not enable a closed formula for the related CI. Alternatively, one could use the exact Chernoff bound rather than a relaxation. Note, however, that if one is willing to use numerical techniques, tail inequalities may become superfluous, because numerically computing the CMF of the HG distribution is often feasible in QKD applications. Crucially, this enables the calculation of  optimal one-sided CIs~\cite{Weizhen,Buonaccorsi} using the well-known Clopper-Pearson method based on pivoting the CMF~\cite{Clopper}. In particular, for our purposes, one can replace $\Gamma^{+}_{n,\epsilon}(\hat{p})$ in Corollary 1 by the tighter statistic $\mathcal{CP}^{+}_{N,n,\epsilon}(\hat{p})$, defined via~\cite{Wright,Weizhen,Buonaccorsi}
\begin{equation}\label{CP}
\mathcal{CP}^{+}_{N,n,\epsilon}\left(x\right)=\min\left\{p\geq{}x\big\rvert{}\Pr\bigl[\hat{p}\leq{}x|p\bigr]\leq{}\epsilon\right\}
\end{equation}
if the target set of the minimization is nonempty, and via $\mathcal{CP}^{+}_{N,n,\epsilon}(x)=(N+1)/N$ otherwise~\cite{clarification}, where we recall that $p$ takes discrete values in the set $\{0,1/N,2/N,\ldots{},1\}$~\cite{speed-up}. Remarkably, accurate determination of the CMF is required in order to compute $\mathcal{CP}^{+}_{N,n,\epsilon}(\hat{p})$, with a numerical precision beyond $\epsilon$ according to Eq.~(\ref{CP})~\cite{precise}. Indeed, for a broad range of settings of interest to QKD, this is a manageable task with existing software.

Importantly, since $\mathcal{CP}^{+}_{N,n,\epsilon}(\hat{p})$ fulfills Corollary 1, Proposition 2 also holds replacing Eq.~(\ref{q_th}) by
\begin{equation}\label{q_th_2}
q^{\rm th}_{N,n,\epsilon}(x)=\frac{N\mathcal{CP}^{+}_{N,n,\epsilon}(x)-nx}{N-n},
\end{equation}
and monotonicity arguments allow to further establish Proposition 3 with this alternative threshold function.

In Fig.~\ref{fig:qvsp} we compare various candidate threshold values $q^{\rm th}$ satisfying Eq.~(\ref{goal}): Eq.~(\ref{q_th}), Eq.~(\ref{q_th_2}), and the thresholds derived in~\cite{Ekert} and~\cite{Leverrier}. For illustration purposes, we consider typical QKD inputs, $p^{\rm th}\leq{}4\%$, $N=10^5$, $n=10^{4}$ and $\epsilon=10^{-9}$, and remark that the improvement of our tools---Eq.~(\ref{q_th}) and Eq.~(\ref{q_th_2})---prevails in all the explored parameter regimes relevant for QKD.

\begin{figure}[H]
    \centering
    \includegraphics[width=\columnwidth]{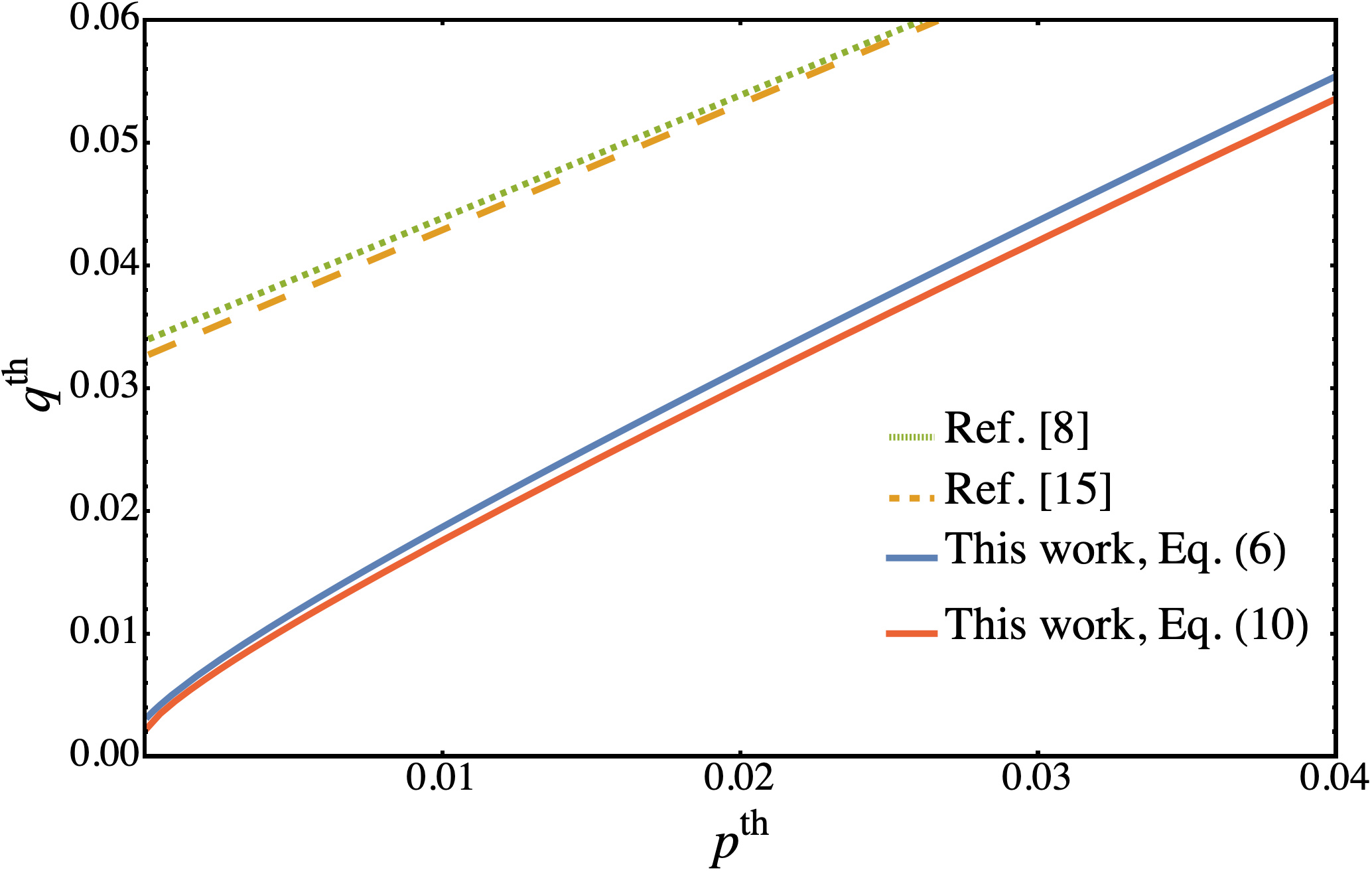}
    \caption{Threshold value $q^{\rm th}$ satisfying Eq.~(\ref{goal}), for $p^{\rm th}\leq{}4\%$, $N=10^5$, $n=10^{4}$,  and $\epsilon=10^{-9}$. The solid red (blue) line corresponds to our numerical (analytical) tool, based on the exact cumulative mass function of the hypergeometric distribution (our relaxed additive Chernoff bound~\cite{Chernoff}). The dashed orange line uses the result in~\cite{Ekert}, which combines Serfling~\cite{Serfling} and Hush and Scovel~\cite{Hush} inequalities, and the dotted green line uses the result in~\cite{Leverrier}, based on Serfling inequality alone.}
    \label{fig:qvsp}
\end{figure}

Additionally, in Appendix IV we include optimal one-sided CIs for the average of independent Bernoulli parameters~\cite{BancalSekatski,Mattner}, potentially replacing $\Gamma^{\pm}_{n,\epsilon}(\hat{p})$ (Proposition 1) in the failure probability estimation of decoy-state QKD schemes. These CIs are related to the Clopper-Pearson CIs for binomial proportions, which in fact have been used in the context of QKD security before (see e.g.~\cite{Lucamarini_1,Lucamarini_2}).\\

\textit{QKD simulations}---Next, we show the impact that the proposed tools have in the finite-key performance of QKD. Let us consider the ideal BBM92 protocol first. Consistently with~\cite{Ekert,Leverrier}, we evaluate a simplified scenario in which Alice and Bob share a $2N$-partite quantum state as a protocol input. Then, they jointly select between two binary and complementary quantum measurements uniformly at random on a round-by-round basis, each of them extracting a raw data block of $N$ bits. These blocks are randomly split into a fixed-length test (sifted-key) string of $n$ ($N-n$) bits. That is to say, both the key and the test string are randomly sampled from the two bases. In the simulations, we set the tolerated quantum bit error rate (QBER) to $p^{\rm th}=4.55\%$ for ease of comparison with~\cite{Ekert}, where this choice is motivated by the results of the Micius experiment~\cite{Micius}. Also, an error correction leakage of $\lambda_{\rm EC}=1.19(N-n)h(p^{\rm th})$ bits is assumed, $h(x)$ denoting the binary entropy of $x$. The correctness parameter, $\epsilon_{\rm cor}$, and the privacy amplification error, $\epsilon_{\rm PA}$, are both set to $10^{-8}$ for illustration purposes, and the failure probability bound is set to $\epsilon_{\rm PE}=4\times{}10^{-16}$. Altogether, these settings lead to an overall secrecy parameter of $\epsilon_{\rm sec}=5\times{}10^{-8}$  (see Appendix I).

\begin{figure}[htbp!]
\centering
 \begin{subfigure}{.5\textwidth}
   \centering
   \includegraphics[width=\columnwidth]{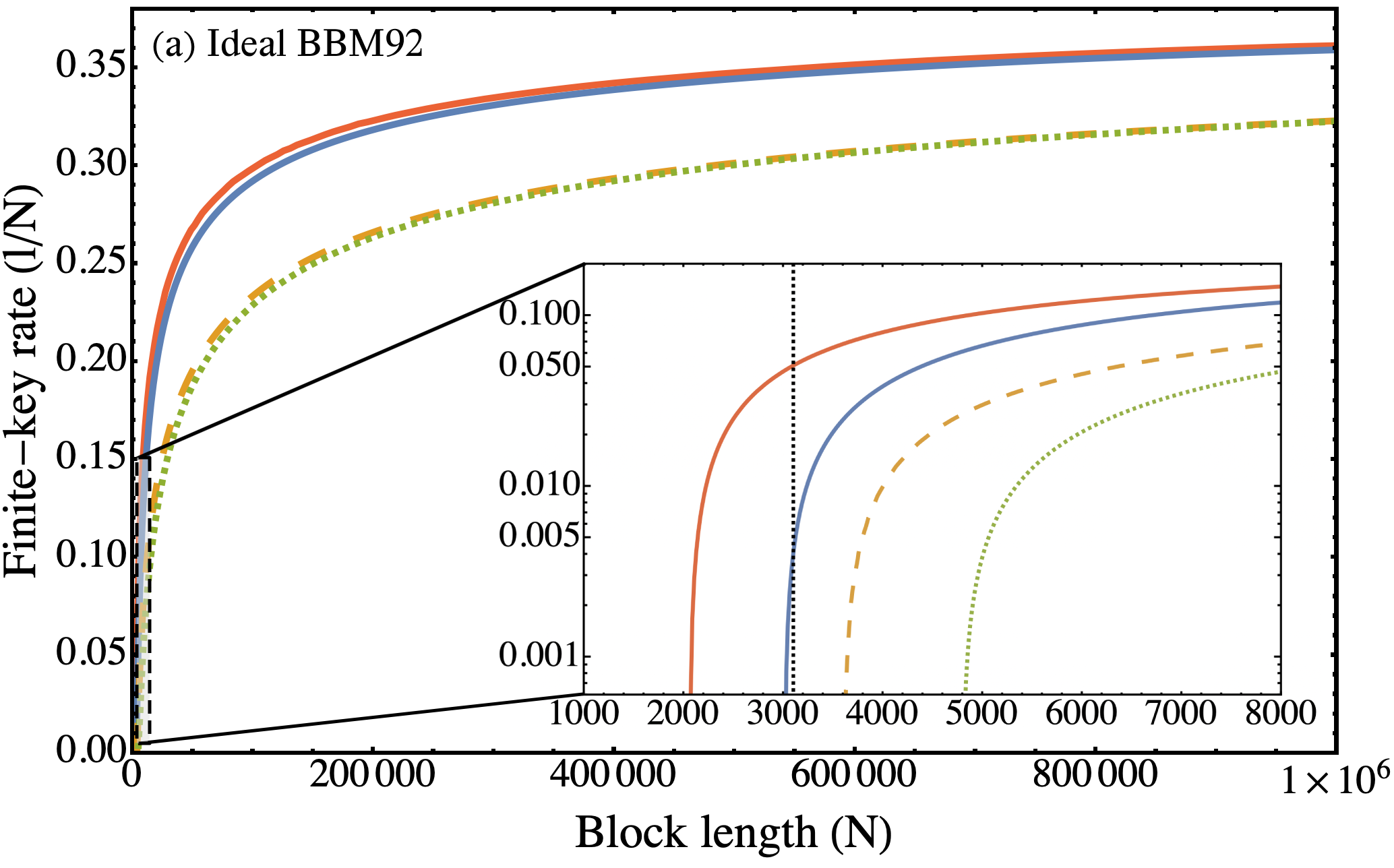}
   \end{subfigure}%
  \hfill
 \begin{subfigure}{.5\textwidth}
   \centering
   \includegraphics[width=\columnwidth]{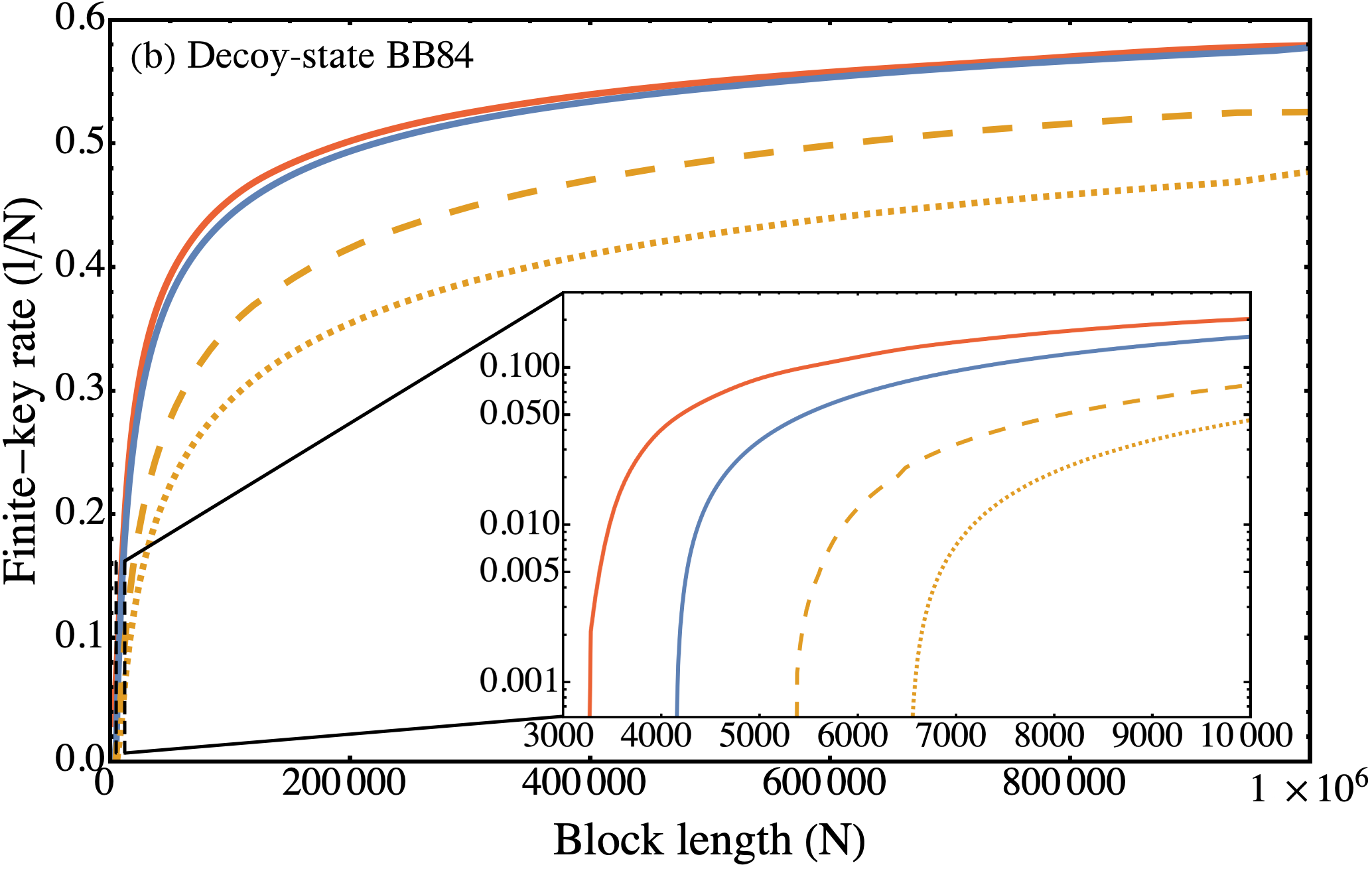}
    \end{subfigure}
    \caption{Finite secret key rate versus data block size. (a) Ideal BBM92 protocol. The dotted green line uses the result in~\cite{Leverrier}, the dashed orange line uses the result in~\cite{Ekert} and the solid blue line uses Proposition 3 in this work, originating from our relaxed additive Chernoff bound. Last, the solid red line corresponds to the numerical upgrade of Proposition 3, which replaces the threshold function of Eq.~(\ref{q_th})---relaxed-Chernoff---by that of Eq.~(\ref{q_th_2})---Clopper-Pearson---. The dotted black vertical line in the inset indicates the block size gathered in the Micius experiment~\cite{Micius}, $N=3100$ bits. (b) Decoy-state BB84 protocol. Both orange lines use the random sampling tool of~\cite{Ekert}, but the dotted (dashed) line relies on Hoeffding's inequality~\cite{Hoeffding} (the tighter multiplicative Chernoff bound deployed in~\cite{Zhang}) for the extra confidence intervals for Bernoulli trials required by the decoy-state method. The solid blue line uses the random sampling tool of Proposition 2 and the extra confidence intervals of Proposition 1, both based on our relaxed Chernoff bound. Finally, the solid red line uses our numerical upgrades of these two propositions, deploying optimal confidence intervals in both cases~\cite{Buonaccorsi,BancalSekatski}. The protocol inputs and experimental settings considered in the simulations are listed in the main text.}
    \label{decoyplot}
\end{figure}

With these inputs, in Fig.~\ref{decoyplot}(a) we plot the extractable key length divided by the data block size for blocks up to $N=10^{6}$ bits, optimizing the sampling fraction $n/N$. In addition, we superpose an inset of the minimum block-size regime, showing that our analytical tool (solid blue line) reduces the minimum block size with respect to~\cite{Ekert} (dashed orange line) by more than 16\%, while our numerical tool (solid red line) attains a reduction beyond 40\%. Furthermore, the tool of~\cite{Ekert} would roughly require a 3-orders-of-magnitude larger $\epsilon_{\rm PE}$ for a nonzero key with the block size collected in the Micius experiment~\cite{Micius} (dotted black vertical line in the inset). Notably as well, both of our tools perform very similarly---and significantly above~\cite{Ekert,Leverrier}---for noncritical block sizes.

Let us now assess the more practical decoy-state BB84 protocol. We consider a scheme with three common intensities per basis, $\mu$ (signal), $\nu$ (decoy) and $\omega$ (vacuum), and employ the $Z$ ($X$) basis for key extraction (PE) with a fixed-size data block of $N_{\rm Z}$ ($N_{\rm X}$) bits drawn at random from the corresponding data pool. Again, in the simulations we contemplate blocks up to $N=N_{\rm Z}+N_{\rm X}=10^{6}$ bits, and superpose an inset of the minimum block-size regime. Naturally, we set $N_{\rm Z}=q_{\rm Z}^2/(q_{\rm Z}^2+q_{\rm X}^2)N$, where $q_{\rm Z}$ ($q_{\rm X}$) denotes the $Z$ ($X$) basis probability, common to both parties. Also, we fix $\omega=10^{-4}$ to account for the finite extinction ratio of the intensity modulator~\cite{finite_mdiQKD}. The settings $\mu$ and $\nu$, together with their respective probabilities, $p_{\mu}$ and $p_{\nu}$, and the test basis probability, $q_{\rm X}$, are tuned to optimize the key length per block, $l/N$. For simplicity, in order to select the acceptance thresholds of the PE tests~\cite{novelty}, the average behavior of a typical fiber and detector model is considered (see Appendix II for the details). We assume an overall system loss of 30 dB, a dark count probability of $p_{\rm d}=6\times{}10^{-7}$~\cite{nino12}, and a misalignment error rate of $e_{\rm mis}=5 \times10^{-3}$~\cite{Lim}. As in the BBM92 protocol, the error correction leakage is set to $\lambda_{\rm EC}=1.19N_{\rm Z}h(\theta^{\rm th})$, $\theta^{\rm th}$ denoting the correctable QBER of a key block~\cite{clarification2}. Last, we set again the overall failure probability to $\epsilon_{\rm PE}=4 \times 10^{-16}$, and assume a common value for each individual error term contributing to it~\cite{clarification3}. Similarly, we take $\epsilon_{\rm cor}=\epsilon_{\rm PA}=\delta=10^{-8}$, where $\delta>0$ is an intrinsic variable of the security analysis (see Appendix II). Overall, this leads to $\epsilon_{\rm sec}={}7\times{}10^{-8}$.

The results are shown in Fig.~\ref{decoyplot}(b). Unlike the BBM92 protocol, the security of the decoy-state scheme does not rely on the random sampling tool alone, but also on CIs for the average parameter of a set of Bernoulli trials. In fact, such CIs are repeatedly used in the decoy-state analysis, and thus they have a critical impact in the performance. The reader is referred to the caption for the specifications of the different lines in Fig.~\ref{decoyplot}(b). Overall, our analytical tools reduce the minimum block size by more than $22\%$ compared to previous approaches, and the reduction exceeds $38\%$ with the numerical tools. Once more, our analytical and numerical solutions perform very similarly in noncritical regimes, with a strong edge over existing approaches.

On top of all the above results, in Appendices V and VI we survey analytical HG bounds available in the literature, and derive two tools that perform better than our Chernoff-based confidence upper bound (Corollary 1) for small blocks. Particularly, in this regime, we obtain a performance comparable to the numerical one by reproducing our analytical method but with Hush and Scovel inequality.\\

\textit{Conclusions}---The performance of QKD is deeply affected by finite statistics. In satellite QKD~\cite{Micius,Sidhu,Shields,Bedington}, the data collection per satellite overpass is limited, and sharp statistics may entail an enormous saving of time and resources. Moreover, one could envision future applications requiring short keys ondemand for daily transactions with high security standards. For these, a poor statistical analysis would result in an unbearable penalty.

In this work, we have developed statistical tools of unprecedented tightness among QKD security proofs~\cite{EUR}. Although we devise the tools for the BBM92 and the decoy-state BB84 protocols, they are of use in, e.g., measurement-device-independent QKD~\cite{finite_mdiQKD,liu2023experimental} and twin-field QKD~\cite{curras2021tight,liu2023TF} as well. Technically, the key enablers of our main analytical results are a classical link between random sampling with and without replacement~\cite{Hoeffding}, and a careful relaxation of the long-known additive Chernoff bound~\cite{Chernoff}. Arguably, the tools might be of interest beyond the realm of QKD, in applications such as quality control tests, clinical trials or hypothesis testing in general. To support this claim, in Appendices V and VI we thoroughly compare our relaxed Chernoff bound with existing hypergeometric bounds suited to the same purpose. What is more, for Hush and Scovel inequality and a tight Bernstein-type~\cite{Bernstein} inequality due to Greene and Wellner~\cite{Greene}, we further reproduce our analytical method and obtain new analytical tools that outperform our Chernoff-based confidence bound if small enough block sizes are considered.

On top of it, we have put forward the possibility of replacing tail inequalities---the cornerstone of statistical analyses in QKD today---by exact computational methods in statistical inference. Remarkably, our results reveal the game-changing potential of such methods to achieve QKD with minimum block sizes, and reinforce the tightness of our simpler analytical tools when moderate-to-large blocks are considered.

\textit{Acknowledgements}---We acknowledge discussions with Jay Bartroff, Davide Rusca and Jean-Daniel Bancal, and useful feedback from Jon Wellner. This work was supported by the Galician Regional Government (consolidation of Research Units: AtlantTIC), the Spanish Ministry of Economy and Competitiveness (MINECO), the Fondo Europeo de Desarrollo Regional (FEDER) through the Grant No. PID2020-118178RB-C21, MICIN with funding from the European Union NextGenerationEU (PRTR-C17.I1) and the Galician Regional Government with own funding through the “Planes Complementarios de I+D+I con las Comunidades Autónomas” in Quantum Communication, the European Union’s Horizon Europe Framework Programme under the Marie Sklodowska-Curie Grant No. 101072637 (Project QSI) and the project Quantum Secure Networks Partnership
(QSNP) (Grant No. 101114043). M.C. acknowledges support from a ``Salvador de Madariaga'' grant from the Spanish Ministry of Science, Innovation and Universities (Grant No. PRX22/00192). 
\newpage

\onecolumngrid
\section*{Appendix I: ideal BBM92 protocol}\label{I}
\subsection{Protocol description}
\begin{table}[b]
\caption{BBM92 protocol inputs}
\label{bbm92inputs}
\begin{tabular}{|c|l|}
\hline
$N$ & Raw data block size  \\
\hline
$n$ & Test data block size \\
\hline
$p^{\rm th}$ & Threshold value for the  QBER\\
\hline
$\lambda_{\rm EC}$ & Error correction leakage \\
\hline
$\epsilon_{\rm PE}$ & Failure probability of the parameter estimation step\\
\hline
$\epsilon_{\rm cor}$ & Failure probability of the error correction step\\
\hline
$\epsilon_{\rm PA}$ & Failure probability of the privacy amplification step\\
\hline
\end{tabular}
\end{table}
In this section we describe the ideal BBM92 protocol from Ref.~\cite{Ekert}, where no detection losses are considered and both the key string and the test string are randomly sampled from the two bases. In this scheme, Alice and Bob agree on the protocol inputs specified in Table \ref{bbm92inputs}, \emph{i.e.} the parameters $N,n,p^{\rm th}$, $\lambda_{\rm EC},\epsilon_{\rm PE},\epsilon_{\rm cor}$, and $\epsilon_{\rm PA}$. Moreover, they share a 2$N$-partite quantum state and consistently choose between two complementary measurements on a round-by-round basis, sacrificing a random subset of $n$ measurement outcomes for testing.

The protocol runs as follows:
\begin{enumerate}
    \item \emph{Measurement:} Alice and Bob measure their quantum states with two complementary measurements, which they select jointly in each round, obtaining a  raw data block of size $N$. A sifted key string of length $N-n$ is sampled at random, and the rest of the raw data (\emph{i.e.} $n$ bits) constitutes the test string.
    \item \emph{Parameter estimation} (PE):
They publicly reveal the test data and compute the quantum bit error rate (QBER). If the QBER exceeds a tolerated threshold value $p^{\rm th}$, they abort the protocol.

  \item{}\emph{Error correction} (EC): {They run a pre-defined EC protocol to reconcile their sifted keys, revealing $\lambda_{\rm EC}$ secret-key bits at most.}

\item{}\emph{Error verification} (EV): {They perform an EV step based on 2-universal hashing, using  tags of $\log(2/\epsilon_{\rm cor})$ bits at most. If the EV tags do not match, they abort the protocol. Otherwise, they proceed to privacy amplification.}

\item{}\emph{Privacy amplification} (PA): {The parties perform a PA step based on 2-universal hashing, each of them obtaining a final key of $l$ bits, with
\begin{equation}
\label{bbm92key}
    l=\left\lfloor (N-n)\left\{ 1-h\left[q^{\rm th}_{N,n,\epsilon_{\rm PE}}\left(p^{\rm th}\right)\right]\right\}-\lambda_{\rm EC}-\log\left(\frac{1}{2 \epsilon_{\rm cor} \epsilon_{\rm PA}^2}\right)\right\rfloor.
\end{equation}
Here, $h(x)$ is the binary entropy function, and the threshold function $q^{\rm th}_{N,n,\epsilon_{\rm PE}}(x)$  depends on the random sampling tool used for PE. These are included in the next section. On the other hand, $\lambda_{\rm EC}$ is the EC leakage, $\epsilon_{\rm cor}$ is the correctness parameter and $\epsilon_{\rm PA}$ is the PA error.}
\end{enumerate}

Importantly, this approach yields provably $\left(2\sqrt{\epsilon_{\rm PE}}+\epsilon_{\rm PA}\right)$-secret and $\epsilon_{\rm cor}$-correct output keys in virtue of a PA lemma for the smooth min-entropy~\cite{QLHL}, where $\epsilon_{\rm PE}$ denotes the corresponding smoothing parameter. As shown in~\cite{Leverrier}, $\epsilon_{\rm PE}$ directly provides an upper bound on the failure probability of the protocol, precisely addressed in the next section.

\subsection{Failure probability}
We recall that, in the BBM92 protocol above, a PE failure refers to the joint event where the PE test succeeds and the presumed threshold on the phase-error rate (PHER) is incorrect~\cite{Leverrier,Ekert}. The threshold $q^{\rm th}_{N,n,\epsilon_{\rm PE}}\left(p^{\rm th}\right)$ on the PHER can be tuned to guarantee that the failure probability of the protocol is below $\epsilon_{\rm PE}$. Namely, it can be chosen to enforce $\Pr\left[\hat{p}\leq{}p^{\rm th},\hat{q}\geq{}q^{\rm th}_{N,n,\epsilon_{\rm PE}}\left(p^{\rm th}\right)\right]\leq{\epsilon_{\rm PE}}$, where $\hat{p}$ and $\hat{q}$ respectively denote the QBER and the PHER. Below we provide suitable candidates for the threshold function based on different statistical tools.
\subsubsection{Relaxed additive Chernoff bound}
From Proposition 3 in the main text, we have
\begin{equation}\label{threshold}
q^{\rm th}_{N,n,\epsilon_{\rm PE}}\left(p^{\rm th}\right)=\frac{N\Gamma^+_{n,\epsilon_{\rm PE}}\left(p^{\rm th}\right)-n{}p^{\rm th}}{N-n},
\end{equation}
where $\Gamma^+_{n,\epsilon_{\rm PE}}(x)$ is given in Proposition 1 of the main text. This corresponds to the solid blue line of Fig.~\ref{decoyplot}(a) there.
\subsubsection{Serfling and Hush \& Scovel inequalities combined}
From Lemma 1 in~\cite{Ekert}, we have
\begin{equation}\label{LimEkert}
  q^{\rm th}_{N,n,\epsilon_{\rm PE}}\left(p^{\rm th}\right)=\max_{\xi}\left\{p^{\rm th}+\xi+\frac{1}{N-n}\sqrt{1+\frac{1}{2 g}\ln\left[\frac{1}{\epsilon_{\rm PE}-\displaystyle{\exp \left(-\frac{2 N n \xi^2}{N-n+1}\right)}}\right]}\right\}
\end{equation}
with
\begin{equation}
    g=\frac{1}{N\left(p^{\rm th}+\xi\right)+1}+\frac{1}{N\left(1-p^{\rm th}-\xi\right)+1},
\end{equation}
where the maximization runs over all $\xi>0$ such that $N(p^{\rm th}+\xi)\in\mathbb{Z}^{+}$ and $\epsilon_{\rm PE}>\displaystyle{}{\exp \left(-\frac{2 N n \xi^2}{N-n+1}\right)}$.

As mentioned in the main text, this approach provides a combination of Serfling inequality~\cite{Serfling} with Hush \& Scovel inequality~\cite{Hush} based on the union bound, and it corresponds to the dashed orange line of Fig.~\ref{decoyplot}(a) in the main text.

\subsubsection{Serfling inequality}
Using Lemma 6 from~\cite{Leverrier} (which follows from Serfling inequality~\cite{Serfling}), we have
\begin{equation}\label{Serfling_threshold}
q^{\rm th}_{N,n,\epsilon_{\rm PE}}\left(p^{\rm th}\right)=p^{\rm th}+\sqrt{\frac{N(n+1)\ln(\epsilon_{\rm PE}^{-1})}{2(N-n)n^2}}.
\end{equation}
This corresponds to the dotted green line of Fig.~\ref{decoyplot}(a) in the main text.

\subsubsection{Exact hypergeometric cumulative mass function}
In the main text, a numerical confidence upper bound $\mathcal{CP}^{+}_{N,n,\epsilon}(\hat{p})$ on the population parameter is presented~\cite{Buonaccorsi,Wright}, based on the classical Clopper-Pearson method of pivoting the cumulative mass function~\cite{Clopper}. Since one can replace $\Gamma^+_{n,\epsilon}(\hat{p})$ by $\mathcal{CP}^{+}_{N,n,\epsilon}(\hat{p})$ in Corollary 1 of the main text, $\mathcal{CP}^{+}_{N,n,\epsilon}(\hat{p})$ automatically fulfills Proposition 2 there as well. From this, one can easily show that it also satisfies Proposition 3, thereby providing another suitable threshold function:
\begin{equation}\label{threshold_CP}
q^{\rm th}_{N,n,\epsilon_{\rm PE}}\left(p^{\rm th}\right)=\frac{N\mathcal{CP}^{+}_{N,n,\epsilon_{\rm PE}}\left(p^{\rm th}\right)-n{}p^{\rm th}}{N-n}.
\end{equation}
This corresponds to the solid red line in of Fig.~\ref{decoyplot}(a) in the main text.

\section*{Appendix II: decoy-state BB84 protocol}\label{II}

We consider a typical decoy-state BB84 protocol in the lines of~\cite{Lim}. Particularly, three common intensities are used in each basis, and the $Z$ ($X$) basis is employed for key extraction (PE).

\subsection{Protocol description}\label{protocol}

Alice and Bob agree on the protocol inputs $T$, $q_{\rm X}$, $\mathcal{K}:=\left\{\mu, \nu, \omega\right\}$, $p_{\mu}$ $p_{\nu}$, $N_{\rm X}$, $N_{\rm Z}$, $\theta^{\rm th}$, $\lambda_{\rm EC}$, $\epsilon_{\rm PE}$, $\epsilon_{\rm cor}$, $\epsilon_{\rm PA}$, and $\delta$, specified in Table~\ref{bb84inputs}. In addition, unless the random sampling tool of~\cite{Ekert} is considered, they further agree on two more protocol inputs inherent to the corresponding PE test, $n_{1,\rm Z}^{\rm th}$ and $\phi_{1,\rm Z}^{\rm th}$. Alternatively, if the tool of~\cite{Ekert} is considered, the additional protocol inputs are $n_{1,\rm Z}^{\rm th,U(L)}$, $n_{1,\rm X}^{\rm th,U(L)}$ and $m_{1,\rm X}^{\rm th,U}$, with which they can ultimately define the relevant threshold values entering the secret key length formula:
\begin{equation}\label{functions}
n_{1,\rm Z}^{\rm th}=n_{1,\rm Z}^{\rm th,L},\hspace{.2cm}\phi_{1,\rm Z}^{\rm th}=\max_{u,v,\in\mathcal{B}}q_{u,v,\epsilon_{\rm PE}}^{\rm th}(e_{1,\rm X}^{\rm th}),
\end{equation}
where $e_{1,\rm X}^{\rm th}=m_{1,\rm X}^{\rm th,U}/n_{1,\rm X}^{\rm th,L}$, $\mathcal{B}=\bigl\{u\in\bigl[n_{1,\rm Z}^{\rm th,L}+n_{1,\rm X}^{\rm th,L},n_{1,\rm Z}^{\rm th,U}+n_{1,\rm X}^{\rm th,U}\bigr],v\in\bigl[n_{1,\rm X}^{\rm th,L},n_{1,\rm X}^{\rm th,U}\bigr],u>v\bigr\},$ and the threshold function under consideration is that of Eq.~(\ref{LimEkert}). Importantly, we adopt a finer-grained PE test for the random sampling tool of~\cite{Ekert} in order to simplify the task of bounding the failure probability in Sec.~\ref{alternative} below. Indeed, it is unclear whether a bound can be derived with this tool without modifying the PE test at all. After all, the tool in~\cite{Ekert} is devised for the BBM92 protocol specifically, rather than for a decoy-state BB84 protocol, and in this work we provide a possible adaptation to the latter.
\\

The protocol runs as follows. For $i=1,\ldots,T$, steps 1 and 2 below are repeated.

\begin{enumerate}

\item{}\emph{State preparation:} Alice chooses a bit value $y_i$ uniformly at random, a basis setting $a_{i}\in \{Z,X\}$ with probability $q_{a_{i}}$, and  an intensity {setting $k_i\in{}\left\{\mu, \nu, \omega\right\}=:\mathcal{K}$ with probability $p_{k_{i}}$}. She prepares a {phase-randomized weak coherent pulse (PRWCP) encoded with the above settings and sends it through the quantum channel to Bob.}

\item{}\emph{Measurement:} Bob measures the {incident pulse in the basis $b_i\in \{Z,X\}$ with probability $q_{b_{i}}$. He records the measurement outcome as $y_i^{\prime}\in\{0,1, \emptyset\}$, where $\emptyset$ refers to the ``no-click" event. Multiple clicks are assigned to 0 or 1 uniformly at random.}
\end{enumerate}
The post-processing and the public discussion run as follows:

\begin{enumerate}
\setcounter{enumi}{2}
\item{}\emph{{Sifting:}} {The bases and intensity settings are publicly revealed and Alice and Bob identify the sets $\mathcal{Z}=\bigl\{i: a_i=b_i=Z,\ y_i^{\prime} \neq \emptyset\bigr\}$ and $\mathcal{X}=\left\{i: a_i=b_i=X,\ y_i^{\prime} \neq \emptyset\right\}$. $N_{\rm Z} $ rounds are drawn at random from $\mathcal{Z}$ to form the sifted key data, $\mathcal{Z}^{\rm sift}$, which decomposes as $\mathcal{Z}^{\rm sift}=\cup_{k \in \mathcal{K}} \mathcal{Z}^{\rm sift}_{k}$ with $\mathcal{Z}^{\rm sift}_k=\left\{i\in\mathcal{Z}^{\rm sift}: k_i=k\right\}$, of size $n_{{\rm Z},k}$. $N_{\rm X} $ rounds are drawn at random from $\mathcal{X}$ to form the test data, $\mathcal{X}^{\rm test}$, which decomposes as $\mathcal{X}^{\rm test}=\cup_{k \in \mathcal{K}} \mathcal{X}^{\rm test}_{k}$ with $\mathcal{X}^{\rm test}_k=\left\{i\in\mathcal{X}^{\rm test}: k_i=k\right\}$,  of size $n_{{\rm X},k}$.}

\item{}\emph{Parameter estimation} (PE): {For each $k \in \mathcal{K}$, they disclose the bit values in $\mathcal{X}_k^{\rm test}$ and compute the corresponding numbers of bit errors, $m_{{\rm X},k}$. With the available data, they perform the PE test that matches their preferred random sampling tool (see Sec.~\ref{failure_decoy} for the details). If the test fails, they abort the protocol.}

\item{}\emph{Error correction} (EC): {They run a pre-defined EC protocol to reconcile their sifted keys, revealing $\lambda_{\rm EC}$ secret key bits at most.}

\item{}\emph{Error verification} (EV): {They perform an EV step based on 2-universal hashing, using tags of $\log(2/\epsilon_{\rm cor})$ bits at most.  If the EV tags do not match, they abort the protocol. Otherwise, they proceed to privacy amplification.}

\item{}\emph{Privacy amplification} (PA): {The parties perform a PA step based on 2-universal hashing, in so obtaining final keys of length
\begin{equation}
\label{keylength}
l=\left\lfloor n^{\rm th}_{1,\mathrm{Z}}\left[1-h\left(\phi^{\rm th}_{1,\rm Z}\right)\right]-\lambda_{\rm EC}-\log\left(\frac{1}{2\epsilon_{\rm cor}\epsilon_{\rm PA}^{2}\delta}\right)\right\rfloor.
\end{equation}
}
\end{enumerate}
This approach yields provably $\left[2\left(\sqrt{\epsilon_{\rm PE}}+\delta\right)+\epsilon_{\rm PA}\right]$-secret~\cite{finpassec,QLHL} and $\epsilon_{\rm cor}$-correct keys for the requested failure probability bound, $\epsilon_{\rm PE}$, carefully addressed in Sec.~\ref{failure_decoy}.
\begin{table}[htbp]
\caption{Decoy-state BB84 protocol inputs alien to the PE test}
\label{bb84inputs}
\begin{tabular}{|c|l|}
\hline
$T$ & Number of transmission rounds  \\
\hline
$q_{\rm X}$ & Test basis probability\\
\hline
$\mathcal{K}:=\{\mu,\nu,\omega\}$ & Set of intensities  \\
\hline
$p_{ \mu}$ & Probability of setting $\mu$ \\
\hline
$p_{ \nu}$ & Probability of setting $\nu$ \\
\hline
$N_{\rm X}$ & Test data block size  \\
\hline
$N_{\rm Z}$ & Key data block size  \\
\hline
$\theta^{\rm th}$ & Threshold value for the sifted-key QBER\\
\hline
$\lambda_{\rm EC}$ & Error correction leakage \\
\hline
$\epsilon_{\rm PE}$ & Failure probability of the parameter estimation step\\
\hline
$\epsilon_{\rm cor}$ & Failure probability of the error correction step\\
\hline
$\epsilon_{\rm PA}$ & Failure probability of the privacy amplification error\\
\hline
$\delta$ & Slack variable in the entropic chain rule~\cite{Vitanov}\\
\hline
\end{tabular}
\end{table}

\subsection{Decoy-state analysis}\label{security}
Here, we keep the analytical decoy-state bounds provided in \cite{Lim} and generalize the statistical analysis~\cite{counterfactual}, assuming a common error probability $\varepsilon$ per statistical bound for simplicity. 

The number of single-photon counts in $\mathcal{Z}^{\rm sift}$ is lower-bounded as
\begin{equation}
\label{singlow}
n_{1,\mathrm{Z}} \overset{  3 \varepsilon}{\geqslant} n^{\rm L}_{1,\mathrm{Z}}:= \frac{\tau_1 \mu}{\mu\left(\nu-\omega\right)-\nu^2+\omega^2}\left(\frac{e^{\nu}}{p_{\nu}}n_{\mathrm{Z}, \nu}^{-}-\frac{e^{\omega}}{p_{\omega}}n_{\mathrm{Z}, \omega}^{+}-\frac{\nu^2-\omega^2}{\mu^2} \frac{e^{\mu}}{p_{\mu}}n_{\mathrm{Z}, \mu}^{+}\right),
\end{equation}
where $\tau_1=\sum_{k \in \mathcal{K}} e^{-k} k \,p_{k}$ and $n_{\mathrm{Z}, k}^{ \pm}=B^{\pm}\left( \varepsilon,n_{{\rm Z},k},N_{\rm Z}\right)$ for the functions $B^{\pm}\left( \varepsilon,n_{{\rm Z},k},N_{\rm Z}\right)$ of Sec.~\ref{addbounds} to Sec.~\ref{CPbounds}, which depend on the tail inequality under consideration for Bernoulli sampling. Importantly as well, the superscript over the inequality symbol in Eq.~(\ref{singlow}) indicates that the corresponding bound holds except with probability $3\varepsilon$ at most. This follows from Boole's inequality~\cite{bonferroni1936teoria}, noticing that all three statistical estimates have error probability $\varepsilon$ at most. We recall that this notation is extensively used below for convenience.
 
 Complementarily to Eq.~(\ref{singlow}), it follows that
\begin{equation}
\label{singup}
n_{1,\rm Z}\overset{2 \varepsilon}{\leqslant} n^{\rm U}_{1,\rm Z}:=\frac{\tau_1}{{\nu-\omega}} \left(\frac{e^{\nu}}{p_{\nu}} n_{\rm Z, \nu}^{+}-\frac{e^{\omega}}{p_{\omega}} n_{\rm Z, \omega}^{-}\right).
\end{equation}

Replacing $Z$ by $X$ everywhere in Eqs.~(\ref{singlow}) and (\ref{singup}), we reach the corresponding bounds for the single-photon counts in $\mathcal{X}$, $n_{1,\mathrm{X}} \overset{  3 \varepsilon}{\geqslant} n^{\rm L}_{1,\mathrm{X}}$ and $n_{1,\mathrm{X}} \overset{  2 \varepsilon}{\leqslant} n^{\rm U}_{1,\mathrm{X}}$.

The number of bit errors associated with the single-photon events in $\mathcal{X}$ is upper bounded as
\begin{equation}
\label{phase}
m_{1,\rm X}\overset{2 \varepsilon}{\leqslant} m^{\rm U}_{1,\rm X}:= \frac{\tau_1}{{\nu-\omega}} \left(\frac{e^{\nu}}{p_{\nu}} m_{\rm X, \nu}^{+}-\frac{e^{\omega}}{p_{\omega}} m_{\rm X, \omega}^{-}\right),
\end{equation}
 where $m^{\pm}_{{\rm X},k}=B^{\pm}\left( \varepsilon,m_{{\rm X},k},M_{\rm X}\right)$ and $M_{\rm X}=\sum_{k\in \mathcal{K}}m_{{\rm X},k}$.
 
 Lastly, it follows that
\begin{equation}
\label{errlow}
m_{1,\mathrm{X}} \overset{  3 \varepsilon}{\geqslant} m^{\rm L}_{1,\mathrm{X}}:= \frac{\tau_1 \mu}{\mu\left(\nu-\omega\right)-\nu^2+\omega^2}\left(\frac{e^{\nu}}{p_{\nu}}m_{\mathrm{X}, \nu}^{-}-\frac{e^{\omega}}{p_{\omega}}m_{\mathrm{X}, \omega}^{+}-\frac{\nu^2-\omega^2}{\mu^2} \frac{e^{\mu}}{p_{\mu}}m_{\mathrm{X}, \mu}^{+}\right).
\end{equation}

Below we provide the $B^{\pm}(x,y,z)$ functions that we contemplate.

\subsubsection{Relaxed additive Chernoff bound}
\label{addbounds}
First, we include the confidence bounds that arise from our tight relaxation of the additive Chernoff bound~\cite{Chernoff}. Precisely, it follows from Proposition 1 in the main text that one can take
\begin{equation}\label{add_pb}
B^{\pm}(\varepsilon,n\hat{p},n ):=n \Gamma^{\pm}_{n, \varepsilon}(\hat{p})
\end{equation}
for the statistics $\Gamma^{\pm}_{n, \varepsilon}(\hat{p})$ defined therein.
 \subsubsection{Exact multiplicative Chernoff bound}
 \label{multbounds}
Alternatively, one could follow the approach in~\cite{Zhang}, which is based on the multiplicative form of the Chernoff bound. As made explicit in~\cite{Bahrani2019,ZC21}, we have
\begin{equation}\label{mult_pb}
    B^{\pm}(\varepsilon,n\hat{p},n ):=n\hat{p}\pm\delta^{\pm}(n\hat{p},\varepsilon),
\end{equation}
for
 \begin{equation}
  \delta^+(x,y)=x\left\{ W_0\left[-\exp(-c_{x,y})\right] + 1 \right\},
   \end{equation}
    \begin{equation}
  \delta^-(x,y)=\begin{cases}
-x \left\{ W_{-1}\left[-\exp(-c_{x,y})\right] + 1 \right\} & \text{if } x \neq 0, \\
\ln{y^{-1}} & \text{if } x = 0
\end{cases},
 \end{equation}
 where $W_j$ stands for the $j$-th branch of the  Lambert $W$ function and $c_{x,y}$ is defined as $c_{x,y} = 1 + \ln{(1/y)}/x$.

\subsubsection{Hoeffding inequality}\label{hoeffbounds}
Similarly, from Hoeffding inequality~\cite{Hoeffding}, we have that
\begin{equation}\label{hoeffding_pb}
    B^{\pm}(\varepsilon,n\hat{p},n ):=n\hat{p}\pm\sqrt{\frac{n}{2}\ln\frac{1}{\varepsilon}}.
\end{equation}

\subsubsection{Exact binomial cumulative mass function}
\label{CPbounds}
Lastly, we include the optimal one-sided confidence intervals presented in Sec.~\ref{IV}, due to~\cite{BancalSekatski}. These rely on the exact evaluation of the binomial cumulative mass function, which is determined by the regularized beta function. Precisely, one can take
\begin{equation}\label{CP_pb}
    B^{\pm}(\varepsilon,n\hat{p},n ):=n\left[\mathcal{F}^{\pm}_{n,\varepsilon}(\hat{p})\pm\Delta\right]
\end{equation}
for any $\Delta>0$, where the statistics $\mathcal{F}^{\pm}_{n,\varepsilon}(\hat{p})$ are introduced in Theorem 1 of Sec.~\ref{IV}.

\subsection{Failure probability}\label{failure_decoy}
In this section, we provide the failure probability estimation of the decoy-state BB84 protocol for the different random sampling tools under consideration.

\subsubsection{Relaxed additive Chernoff bound}\label{ours}
In this case, the PE test is defined by the success event
\begin{equation}\label{simpler}
\Omega_{\rm TEST} = \{n_{1,\rm Z}^{\rm L} \geq n_{1,\rm Z}^{\rm th},\ \phi_{1,\rm Z}^{\rm U} \leq \phi_{1,\rm Z}^{\rm th}\},
\end{equation}
where the threshold values $n_{1,\rm Z}^{\rm th}$ and $\phi_{1,\rm Z}^{\rm th}$ are the test-specific protocol inputs (see Sec.~\ref{protocol}), the random variable $n_{1,\rm Z}^{\rm L}$ is introduced in Sec.~\ref{security}, and the random variable $\phi_{1,\rm Z}^{\rm U}$ is defined as follows:
\begin{equation}\label{pher_us}
\phi_{1,\rm Z}^{\rm U}=\frac{\left(n_{1,\rm Z}^{\rm U}+n_{1,\rm X}^{\rm U}\right)\Gamma^+_{\displaystyle{n_{1,\rm X}^{\rm L},\varepsilon}}\left(\displaystyle{\frac{m_{1,\rm X}^{\rm U}}{n_{1,\rm X}^{\rm L}}}\right)-m_{1,\rm X}^{\rm L}}{n_{1,\rm Z}^{\rm L}},
\end{equation}
where the variables $n_{1,\rm Z}^{\rm U}$, $n_{1,\rm X}^{\rm L(U)}$ and $m_{1,\rm X}^{\rm L(U)}$ are also presented in Sec.~\ref{security}. Besides, we recall that $\varepsilon>0$ is the common error probability settled for each individual concentration inequality in Sec.~\ref{security}.

On the other hand, the event where the PE thresholds are not fulfilled by the actual single-photon variables reads
\begin{equation}\label{PE_failure}
\Omega_{\rm PE} = \{ n_{1,\rm Z} \leq{} n_{1,\rm Z}^{\rm th} \} \cup \{ \phi_{1,\rm Z} \geq{} \phi_{1,\rm Z}^{\rm th} \},
\end{equation}
where $n_{1,\rm Z}$ stands for the number of single-photon counts in the sifted key, and $\phi_{1,\rm Z}$ stands for the corresponding single-photon PHER.

Given the definitions of $\Omega_{\rm TEST}$ and $\Omega_{\rm PE}$, bounding the failure probability of the protocol amounts to finding $\epsilon_{\rm PE}>0$ such that $\Pr\left[\Omega_{\rm TEST},\Omega_{\rm PE}\right] \leq \epsilon_{\rm PE}$. This can be easily simplified by noticing that
\begin{equation}
\Pr\left[\Omega_{\rm TEST},\Omega_{\rm PE}\right]\leq{}\Pr\left[n_{1,\rm Z}^{\rm L} \geq n_{1,\rm Z} \cup \phi_{1,\rm Z}^{\rm U} \leq \phi_{1,\rm Z} \right],
\end{equation}
which follows because $A\implies{}B$ implies that $\Pr[A]\leq{}\Pr[B]$ for arbitrary events $A$ and $B$. Now, it is convenient to introduce the event where all the decoy-state bounds of Sec.~\ref{security} hold. Namely,
\begin{equation}\label{Lambda}
\Lambda=\left\{n_{1,\rm Z} \in (n_{1,\rm Z}^{\rm L}, n_{1,\rm Z}^{\rm U}),\ n_{1,\rm X} \in (n_{1,\rm X}^{\rm L}, n_{1,\rm X}^{\rm U}),\ m_{1,\rm X} \in (m_{1,\rm X}^{\rm L}, m_{1,\rm X}^{\rm U})\right\}.
\end{equation}
Making use of $\Lambda$, it follows that
\begin{equation}\label{LTP}
\Pr\left[n_{1,\rm Z}^{\rm L} \geq n_{1,\rm Z} \cup \phi_{1,\rm Z}^{\rm U} \leq \phi_{1,\rm Z} \right]\leq{}\Pr\left[\bar\Lambda\right] + \Pr\left[\phi_{1,\rm Z} \geq \phi_{1,\rm Z}^{\rm U},\Lambda\right],
\end{equation}
where we simply exploit the fact that $\Pr[A]\leq{}\Pr[\bar{B}]+\Pr[A,B]$ (for arbitrary events $A$ and $B$) and further invoke the trivial implication $\Lambda \implies n_{1,\rm Z} > n_{1,\rm Z}^{\rm L}$. Regarding the first term in the right-hand side of Eq.~(\ref{LTP}), carefully counting the errors in Sec.~\ref{security} yields $\Pr\left[\bar\Lambda\right]\leq{}15\varepsilon$, and therefore we conclude
\begin{equation}\label{preliminary_bound}
\Pr\left[\Omega_{\rm TEST},\Omega_{\rm PE}\right]\leq{}15\varepsilon+\Pr\left[\phi_{1,\rm Z} \geq \phi_{1,\rm Z}^{\rm U},\Lambda\right].
\end{equation}
All that remains is to set an upper bound on $\Pr\left[\phi_{1,\rm Z} \geq \phi_{1,\rm Z}^{\rm U},\Lambda\right]$. For this, a critical observation is that
\begin{equation}\label{swapping}
\Lambda\implies{}\phi_{1,\rm Z}^{\rm U}
>q^{\rm th}_{\displaystyle{n_{1,\rm Z} + n_{1,\rm X},n_{1,\rm X},\varepsilon}}\left(\frac{m_{1,\rm X}}{n_{1,\rm X}}\right),
\end{equation}
for the threshold function $q^{\rm th}_{N,n,\epsilon_{\rm PE}}(x)$ defined in Eq.~(\ref{threshold}). Crucially, note that the right-hand side in the inequality of Eq.~(\ref{swapping}) results from ``dropping the superscripts" in the left-hand side (given by Eq.~(\ref{pher_us})), or equivalently, from replacing the decoy random variables $n^{\rm L(U)}_{1,{\rm X(Z)}},n^{\rm L(U)}_{1,{\rm  X(Z)} }$ and $m^{\rm L(U)}_{1,{\rm  X} }$ by the actual single-photon random variables $n_{1,{\rm X(Z)}},n_{1,{\rm  X(Z)} }$ and $m_{1,{\rm  X} }$. Indeed, both the definition of $\phi_{1,\rm Z}^{\rm U}$ and the choice of $\Lambda$ are purely ad-hoc in order to enable Eq.~(\ref{swapping}), because Proposition 2 in the main text does not apply to the decoy random variables but only to the actual single-photon random variables. In summary, because of Eq.~(\ref{swapping}),
\begin{equation}\label{swapping_2}
\Pr\left[\phi_{1,\rm Z} \geq \phi_{1,\rm Z}^{\rm U},\Lambda\right]\leq{}\Pr\left[\phi_{1,\rm Z} \geq{}q^{\rm th}_{\displaystyle{n_{1,\rm Z} + n_{1,\rm X},n_{1,\rm X},\varepsilon}}\left(\frac{m_{1,\rm X}}{n_{1,\rm X}}\right) \right],
\end{equation}
since, again, $A\implies{}B$ implies that $\Pr[A]\leq{}\Pr[B]$. At this point, the only technical obstacle for the application of Proposition 2 in the main text is that this bound presumes a population with fixed size and frequency of ones (errors), and also a fixed test sample size. Therefore, by averaging over population sizes, $n_{1,\rm Z} + n_{1,\rm X}$, test sample sizes, $n_{1,\rm X}$, and total frequencies of errors, $(m_{1,\rm Z}^{\rm ph}+m_{1,\rm X})/(n_{1,\rm Z} + n_{1,\rm X})$ ---where we have introduced $m_{1,\rm Z}^{\rm ph}=\phi_{1,\rm Z}n_{1,\rm Z}$ for clarity---, Proposition 2 in the main text applies for each individual term in the average and, therefore,
\begin{equation}
\Pr\left[\phi_{1,\rm Z} \geq{}q^{\rm th}_{\displaystyle{n_{1,\rm Z} + n_{1,\rm X},n_{1,\rm X},\varepsilon}}\left(\frac{m_{1,\rm X}}{n_{1,\rm X}}\right) \right]\leq{\varepsilon}
\end{equation}
follows. Note that this averaging technique is of the exact same nature as the one used to bound the failure probability of the BBM92 protocol (Proposition 3 in the main text), also deployed in~\cite{Leverrier,Ekert}. Lastly, combining this bound with Eq.~(\ref{swapping_2}) and Eq.~(\ref{preliminary_bound}), we reach $\Pr\left[\Omega_{\rm TEST},\Omega_{\rm PE}\right]\leq{}16\varepsilon$. In conclusion, with the random sampling tool proposed in this work, $\epsilon_{\rm PE}=16\varepsilon$ provides the desired upper bound on the failure probability.
\subsubsection{Serfling and Hush \& Scovel inequalities combined}\label{alternative}
The result in~\cite{Ekert} is devised for the ideal BBM92 protocol, and it is a side contribution of this work to make it compatible with the practical decoy-state BB84 protocol. For this purpose, the alternative PE test we consider is
\begin{equation}
\Omega_{\rm TEST}=\left\{n_{1,\rm Z}^{\rm L} \geq n_{1,\rm Z}^{\rm th,L}, n_{1,\rm Z}^{\rm U} \leq n_{1,\rm Z}^{\rm th,U}, n_{1,\rm X}^{\rm L} \geq n_{1,\rm X}^{\rm th,L}, n_{1,\rm X}^{\rm U} \leq n_{1,\rm X}^{\rm th,U}, m_{1,\rm X}^{\rm U} \leq m_{1,\rm X}^{\rm th,U}\right\},
\end{equation}
where the threshold values $n_{1,\rm Z}^{\rm th,L(U)}$, $n_{1,\rm Z}^{\rm th,L(U)}$ and $m_{1,\rm X}^{\rm th,U}$ are the test-specific protocol inputs (see Sec.~\ref{protocol}). On the other hand, the event where the PE thresholds do not hold is again given by $\Omega_{\rm PE} = \{ n_{1,\rm Z} \leq{} n_{1,\rm Z}^{\rm th} \} \cup \{ \phi_{1,\rm Z} \geq{} \phi_{1,\rm Z}^{\rm th} \}$, recalling that, in this case, $n_{1,\rm Z}^{\rm th}$ and $\phi_{1,\rm Z}^{\rm th}$ are defined in Eq.~(\ref{functions}) in terms of the thresholds above. 

We now introduce the event
\begin{equation}\label{Sigma}
\Sigma=\left\{n_{1,\rm Z} \in \left(n_{1,\rm Z}^{\rm L}, n_{1,\rm Z}^{\rm U}\right),\ n_{1,\rm X} \in \left(n_{1,\rm X}^{\rm L}, n_{1,\rm X}^{\rm U}\right),\ m_{1,\rm X}<m_{1,\rm X}^{\rm U}\right\},
\end{equation}
which matches $\Lambda$ in Eq.~(\ref{Lambda}) exactly except from the fact that it does not contemplate the condition $\{m_{1,\rm X}>m_{1,\rm X}^{\rm L}\}$. In particular, this implies that $\Pr[\bar{\Sigma}]=12\varepsilon$ (composing the relevant errors in Sec.~\ref{failure_decoy}), and therefore $\Pr\left[\Omega_{\rm TEST},\Omega_{\rm PE}\right]\leq{}12\varepsilon + \Pr\left[\Omega_{\rm TEST},\Omega_{\rm PE},\Sigma\right]$. Crucially at this point, the ad hoc definitions of $\Omega_{\rm TEST}$ and $\Sigma$ assure the following trivial implications:
\begin{eqnarray}\label{implications}
&&(\Omega_{\rm TEST},\Sigma)\implies{}\left\{n_{1,\rm Z(X)}\in{}\left(n_{1,\rm Z(X)}^{\rm th,L},n_{1,\rm Z(X)}^{\rm th,U}\right),\ m_{1,\rm X}<m_{1,\rm X}^{\rm th,U}\right\}\implies{}\nonumber \\
&&\left\{e_{1,\rm X}\leq{}e_{1,\rm X}^{\rm th},\ \phi_{1,\rm Z}^{\rm th}\geq{}q^{\rm th}_{\displaystyle{n_{1,\rm Z} + n_{1,\rm X},n_{1,\rm X},\varepsilon}}\left(e_{1,\rm X}^{\rm th}\right)\right\},
\end{eqnarray}
where we have introduced the single-photon bit-error rate $e_{1,\rm X}=m_{1,\rm X}/n_{1,\rm X}$. Namely, the conjunction of $\Omega_{\rm TEST}$ and $\Sigma$ guarantees that the actual single-photon variables ---and not only their decoy-state bounds--- fulfill the thresholds of the PE test (first implication in Eq.~(\ref{implications})), which in turn implies that the random variable $e_{1,\rm X}=m_{1,\rm X}/n_{1,\rm X}$ is upper-bounded by the value $e_{1,\rm X}^{\rm th}=m^{\rm th,U}_{1,\rm X}/n^{\rm th,L}_{1,\rm X}$, and the value $\phi_{1,\rm Z}^{\rm th}=\max_{u,v,\in\mathcal{B}}q_{u,v,\epsilon_{\rm PE}}^{\rm th}(e_{1,\rm X}^{\rm th})$ is equal or larger than the function of random variables $q^{\rm th}_{\displaystyle{n_{1,\rm Z} + n_{1,\rm X},n_{1,\rm X},\varepsilon}}\left(e_{1,\rm X}^{\rm th}\right)$ (second implication in Eq.~(\ref{implications})). The latter holds simply because, if the actual single-photon random variables $n_{1,\rm Z}$ and $n_{1,\rm X}$ fulfill the thresholds of the PE test, the threshold function evaluated in these variables cannot be larger than the threshold function evaluated in the worst-case variables compatible with the thresholds (captivated by the maximization of some arbitrary inputs $u$ and $v$ over $\mathcal{B}$).


In short, we have 
\begin{eqnarray}
&&\Pr\left[\Omega_{\rm TEST},\Omega_{\rm PE},\Sigma\right]\leq{}\Pr\left[n_{1,\rm Z(X)}\in{}\left(n_{1,\rm Z(X)}^{\rm th,L},n_{1,\rm Z(X)}^{\rm th,U}\right),\ m_{1,\rm X}<m_{1,\rm X}^{\rm th,U},\ \phi_{1,\rm Z} \geq{} \phi_{1,\rm Z}^{\rm th}\right]\nonumber \\
&&\leq{}\Pr\left[e_{1,\rm X}\leq{}e_{1,\rm X}^{\rm th},\ \phi_{1,\rm Z}^{\rm th} \geq{} q^{\rm th}_{\displaystyle{n_{1,\rm Z} + n_{1,\rm X},n_{1,\rm X},\varepsilon}}\left(e_{1,\rm X}^{\rm th}\right),\ \phi_{1,\rm Z} \geq{}\ \phi_{1,\rm Z}^{\rm th}\right]\nonumber \\
&&\leq{}\Pr\left[e_{1,\rm X}\leq{}e_{1,\rm X}^{\rm th},\ \phi_{1,\rm Z} \geq{}\ q^{\rm th}_{\displaystyle{n_{1,\rm Z} + n_{1,\rm X},n_{1,\rm X},\varepsilon}}\left(e_{1,\rm X}^{\rm th}\right)\right],
\end{eqnarray}
where in the first inequality we invoke the first implication of Eq.~(\ref{implications}) ---and disregard the event $n_{1,\rm Z}\leq{}n_{1,\rm Z}^{\rm th}$ of $\Omega_{\rm PE}$ because it is incompatible with $n_{1,\rm Z}\in{}\bigl(n_{1,\rm Z}^{\rm th,L},n_{1,\rm Z}^{\rm th,U}\bigr)$---, in the second inequality we invoke the second implication of Eq.~(\ref{implications}), and in the third inequality we make use of the fact that $\left\{\phi_{1,\rm Z}^{\rm th} \geq{} q^{\rm th}_{\displaystyle{n_{1,\rm Z} + n_{1,\rm X},n_{1,\rm X},\varepsilon}}\left(e_{1,\rm X}^{\rm th}\right),\ \phi_{1,\rm Z} \geq{}\ \phi_{1,\rm Z}^{\rm th}\right\}\implies{}\left\{\phi_{1,\rm Z} \geq{}\ q^{\rm th}_{\displaystyle{n_{1,\rm Z} + n_{1,\rm X},n_{1,\rm X},\varepsilon}}\left(e_{1,\rm X}^{\rm th}\right)\right\}$. To finish with, the desired bound follows from an averaging technique, as usual. Particularly, averaging over population sizes ($n_{1,\rm Z} + n_{1,\rm X}$) and test sample sizes ($n_{1,\rm X}$) in this case, it follows that
\begin{equation}
\Pr\left[e_{1,\rm X}\leq{}e_{1,\rm X}^{\rm th},\ \phi_{1,\rm Z} \geq{}\ q^{\rm th}_{\displaystyle{n_{1,\rm Z} + n_{1,\rm X},n_{1,\rm X},\varepsilon}}\left(e_{1,\rm X}^{\rm th}\right)\right]\leq{}\varepsilon
\end{equation}
from the random sampling bound derived in~\cite{Ekert}. Putting it all together, we reach $\Pr\left[\Omega_{\rm TEST},\Omega_{\rm PE}\right]\leq{}13\varepsilon$, and thus we can take $\epsilon_{\rm PE}=13\varepsilon$.
\subsubsection{Serfling inequality}\label{alternative2}
With this tool~\cite{Leverrier,Serfling}, we define the PE test of the protocol equally as in Sec.~\ref{ours}, but taking
\begin{equation}\label{Serfling_worstcase}
\phi_{1,\rm Z}^{\rm U}=\displaystyle{\frac{m_{1,\rm X}^{\rm U}}{n_{1,\rm X}^{\rm L}}}+\sqrt{\frac{(n_{1,\rm Z}^{\rm U}+n_{1,\rm X}^{\rm U})(n_{1,\rm X}^{\rm U}+1)\ln(\varepsilon^{-1})}{2{}n_{1,\rm Z}^{\rm L}\left(n_{1,\rm X}^{\rm L}\right)^2}}.
\end{equation}
Namely, we use the threshold function of Eq.~(\ref{Serfling_threshold}) rather than that of Eq.~(\ref{threshold}). In these circumstances, one can readily prove the failure probability bound $\epsilon_{\rm PE}=13\varepsilon$ reproducing the steps of Sec.~\ref{ours}, but with the ancillary event $\Sigma$ (Eq.~(\ref{Sigma})) rather than $\Lambda$ (Eq.~(\ref{Lambda})). This analogy with Sec.~\ref{ours} comes from the fact that the bound based on Serfling inequality can be trivially stated in the form of Proposition 2 in the main text, just like the bound proposed in this work.

\subsubsection{Exact hypergeometric cumulative mass function}\label{alternative3}
Similarly, the bound based on the Clopper-Pearson method~\cite{Buonaccorsi,Wright} can be stated in the form of Proposition 2 in the main text too. As a consequence, considering the simple PE test of Sec.~\ref{ours}, one can establish that $\epsilon_{\rm PE}=16\epsilon$ is a valid choice following the exact same steps presented there ---in particular, keeping the ancillary event $\Lambda$ of Eq.~(\ref{Lambda})---, but taking
\begin{equation}
\phi_{1,\rm Z}^{\rm U}=\frac{\left(n_{1,\rm Z}^{\rm U}+n_{1,\rm X}^{\rm U}\right){\mathcal{CP}}^+_{\displaystyle{n_{1,\rm Z}^{\rm U}+n_{1,\rm X}^{\rm U},n_{1,\rm X}^{\rm L},\varepsilon}}\left(\displaystyle{\frac{m_{1,\rm X}^{\rm U}}{n_{1,\rm X}^{\rm L}}}\right)-m_{1,\rm X}^{\rm L}}{n_{1,\rm Z}^{\rm L}}.
\end{equation}
\subsection{Channel model}\label{channel model}
Here, we describe the detector and channel model deployed in the decoy-state BB84 simulations of the main text. The expected detection rate of a signal of intensity $k$ is given by
\begin{equation}
D_k =1-(1-2p_{\rm d})\exp(-\eta k),
\end{equation}
and the probability of having a bit error for intensity $k$ is given by
\begin{equation}
e_k=p_{\rm d}+e_{\rm mis}\left[1-\exp(-\eta k) \right],
\end{equation}
where $p_d$ is the dark count probability of Bob's detectors,  $e_{\rm mis}$ is the misalignment error rate of the system, and $\eta $ is the overall system efficiency. Namely, $\eta=10^{-\lambda/10}$, $\lambda$ denoting the total loss in dB.

This model is used to select the threshold values of the PE test. To be precise, the thresholds are set to the expectation values of the underlying variables according to the channel model. As an example, within the PE test $\Omega_{\rm TEST} = \{n_{1,\rm Z}^{\rm L} \geq n_{1,\rm Z}^{\rm th},\ \phi_{1,\rm Z}^{\rm U} \leq \phi_{1,\rm Z}^{\rm th}\}$, $n_{1,\rm Z}^{\rm th}$ is set to the decoy-state bound of Eq.~(\ref{singlow}), upon replacement of $n_{{\rm Z},k}$ by its expectation according to the channel model:
\begin{equation}\label{expectation1}
E\left[n_{{\rm Z},k}\right]=N_{\rm Z}\frac{p_k D_k}{\sum_j p_j D_j}.
\end{equation}
Similarly, $\phi_{1,\rm Z}^{\rm th}$ is set to the threshold function under consideration, upon replacement of all the observables ($n_{{\rm Z},k}$, $n_{{\rm X},k}$ and $m_{{\rm X},k}$) by their expectations according to the channel model. Particularly,
\begin{equation}\label{expectation2}
E\left[m_{{\rm X},k}\right]=N_{\rm X}\frac{p_k e_k}{\sum_j p_j D_j},
\end{equation}
and $E[n_{{\rm X},k}]$ is obtained simply replacing $Z$ by $X$ in Eq.~(\ref{expectation1}).

We remark that the above choices of the thresholds in the PE test are over-optimistic, as they would incur in large abortion probabilities even if the channel behaves according to the considered model~\cite{robustness}. A more sensitive approach consists of making robustness considerations to incorporate statistical fluctuations in the thresholds. Ultimately, however, in a real experiment, the thresholds would be chosen via careful channel monitoring.\\

\section*{Appendix III: Computing confidence intervals from the additive Chernoff bounds}\label{III}
This section includes the technical derivations underlying Proposition 1 in the main text. In the first place, we provide a relaxation of the upper additive Chernoff bound~\cite{Chernoff}.
\begin{proposition}
Let $\hat{p}$ be the average of $n$ independent Bernoulli variables, with expected value $\mathbb{E}\left(\hat{p}\right)=p$. Then, for all $z\in[p,1]$, $\Pr[\hat{p}\geq{}z]\leq{e^{-n{}D(z,p)}}$ for $D(z,p)=9(z-p)^2/2(z+2p)(3-z-2p)$.
\end{proposition}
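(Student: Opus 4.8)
The plan is to obtain the proposition from the \emph{exact} additive Chernoff bound combined with a single, carefully chosen lower bound on the binary relative entropy. First I would recall the classical bound for the average of independent Bernoulli variables: writing $D_{\mathrm{KL}}(z\|p)=z\ln(z/p)+(1-z)\ln((1-z)/(1-p))$, the exponential Markov inequality applied to $\sum_i X_i$, together with the arithmetic–geometric-mean inequality (which reduces the nonidentical means $p_i$ to their common average $p$) and optimization over the tilt parameter, yields $\Pr[\hat p\ge z]\le e^{-n D_{\mathrm{KL}}(z\|p)}$ for every $z\in[p,1]$. The whole statement then reduces to the purely analytic claim $D_{\mathrm{KL}}(z\|p)\ge 9(z-p)^2/[2(z+2p)(3-z-2p)]$ on $\{0\le p\le z\le 1\}$; once this is established, composing it with the Chernoff bound finishes the proof.

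For the analytic inequality I would deliberately \emph{avoid} a term-by-term treatment of the two logarithms. This is the natural first attempt, but it fails: any simple rational (Padé-type) bound on $\ln$ is a lower bound on one side of $1$ and an upper bound on the other, so the two summands of $D_{\mathrm{KL}}(z\|p)$ get controlled in opposite directions and the clean rational form does not survive (a quick symmetry computation shows the collapse would force equal constants on both terms, which is incompatible with validity). Instead I would use the integral representation $D_{\mathrm{KL}}(z\|p)=\int_p^z (z-s)/[s(1-s)]\,\mathrm{d}s$, which after the affine substitution $s=p+(z-p)u$ becomes $(z-p)^2\int_0^1 (1-u)/[s(1-s)]\,\mathrm{d}u$. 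Since $\phi(s)=1/[s(1-s)]$ is convex on $(0,1)$, I would apply Jensen's inequality with respect to the probability measure $2(1-u)\,\mathrm{d}u$ on $[0,1]$. Its barycenter is $s_0=p+(z-p)\,\mathbb{E}[u]=(z+2p)/3$, so $\int_0^1 (1-u)\phi(s)\,\mathrm{d}u\ge \tfrac12\phi(s_0)$ and hence $D_{\mathrm{KL}}(z\|p)\ge (z-p)^2/[2 s_0(1-s_0)]=9(z-p)^2/[2(z+2p)(3-z-2p)]$, exactly the target $D(z,p)$. This single Jensen step is equivalent to the logarithmic inequality invoked in the statement, and it makes the appearance of $z+2p$ transparent: it is three times the $(1-u)$-weighted centroid of $s$ over $[p,z]$.

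The main obstacle—and the real content—is this second step: recognizing that the bound must come from a \emph{global} convexity argument (equivalently, one judiciously selected logarithmic inequality) rather than from bounding the two entropy terms separately. As sanity checks that the constant $9/2$ and the centroid $(z+2p)/3$ are the correct ones, I would verify that both $D(z,p)$ and $D_{\mathrm{KL}}(z\|p)$ are invariant under $z\mapsto 1-z$, $p\mapsto 1-p$, and that the inequality is saturated to second order as $z\to p$, where both sides behave like $(z-p)^2/[2p(1-p)]$. The only points requiring mild care are the boundaries $p\to 0$ or $z\to 1$, where $\phi$ or the weighted integral may diverge; there the inequality holds trivially, since the left-hand side tends to $+\infty$ while the right-hand side stays finite, so no separate argument is needed.
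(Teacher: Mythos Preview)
Your proof is correct, and the Jensen step via the integral representation $D_{\mathrm{KL}}(z\|p)=\int_p^z(z-s)/[s(1-s)]\,\mathrm{d}s$ with the convex integrand $\phi(s)=1/[s(1-s)]$ is clean and yields exactly the stated $D(z,p)$; the sanity checks (the $z\mapsto 1-z$, $p\mapsto 1-p$ symmetry, the second-order matching as $z\to p$, and the boundary behaviour) are all in order.

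However, your route differs from the paper's in precisely the place you anticipate and then dismiss. The paper \emph{does} proceed term by term: it invokes the Tops{\o}e inequality $\ln(1+x)\ge x(6+5x)/[2(1+x)(3+x)]$, first verifies that it holds for \emph{all} $x>-1$ (not only $-1<x\le 0$), and then applies it separately to $z\ln(z/p)$ and $(1-z)\ln\bigl((1-z)/(1-p)\bigr)$; the two resulting rational pieces combine algebraically into the same $D(z,p)$. Your assertion that any simple rational bound on $\ln$ must flip from lower to upper across $x=0$ is therefore too pessimistic---the Tops{\o}e bound is a genuine two-sided lower bound, and the paper proves it by noting that the difference has derivative $x^3/[(1+x)^2(3+x)^2]$, hence a unique global minimum of zero at $x=0$. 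What your approach buys is a transparent origin for the centroid $(z+2p)/3$ and the constant $9/2$; what the paper's approach buys is complete elementarity (one explicit log inequality, no integral representation or convexity). Your remark that the Jensen step is ``equivalent to the logarithmic inequality invoked'' is not quite right at the level of proofs: the two arguments reach the same endpoint by different mechanisms, and only the term-by-term one actually uses a logarithmic inequality.
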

\begin{proof}
We recall that the original bound~\cite{Chernoff} is analogous to the above proposition but replacing $D(z,p)$ by the Kullback-Leibler divergence,
\begin{equation}\label{divergence}
D(z||p)=z\ln\left(\frac{z}{p}\right)+(1-z)\ln\left(\frac{1-z}{1-p}\right).
\end{equation}
Therefore, it suffices to show that $D(z||p)\geq{}D(z,p)$ for all $z\in[p,1]$. For this purpose, we build on the logarithmic inequality~\cite{Topsoe}
\begin{equation}\label{ineq}
\ln(1+x)\geq{}\frac{x(6+5x)}{2(1+x)(3+x)}.
\end{equation}
Although this inequality is only established for $-1<x\leq{}0$ in~\cite{Topsoe}, it actually holds for all $x>-1$, which is required in this proof. To see this, we note that the difference
\begin{equation}
h(x)=\ln(1+x)-\frac{x(6+5x)}{2(1+x)(3+x)}
\end{equation}
is non-negative for all $x>-1$. This is shown by the fact that the derivative of $h(x)$ is given by
\begin{equation}
h'(x)=\frac{x^3}{(1+x)^2(3+x)^2},
\end{equation}
revealing that $h(x)$ reaches a global minimum at $x=0$, given by $h(0)=0$. Plugging the inequality in each term of the divergence yields
\begin{equation}
D(z||p) \geq z \frac{(\frac{z}{p}-1)(6+5(\frac{z}{p}-1))}{2(1+(\frac{z}{p}-1))(3+(\frac{z}{p}-1))}+(1-z)\frac{(\frac{1-z}{1-p}-1)(6+5(\frac{1-z}{1-p}-1))}{2(1+(\frac{1-z}{1-p}-1))(3+(\frac{1-z}{1-p}-1))},
\end{equation}
which, after some straightforward manipulation, leads to
\begin{equation}
D(z||p) \geq{} \frac{(z-p)(p+5z)}{2(z+2p)}+ \frac{(p-z)(6-5z-p)}{2(3-z-2p)}.
\end{equation}
Lastly, combining both summands in a single fraction and simplifying yields
\begin{equation}
D(z||p) \geq \frac{9(z-p)^2}{2(z+2p)(3-z-2p) }.
\end{equation}
Hence, the desired relaxation follows.
\end{proof}
Coming next, we reformulate Proposition 1 above to determine the upper bound on $\hat{p}$ that matches any desired error probability $\epsilon$.
\begin{proposition}
Let $\hat{p}$ be the average of $n$ independent Bernoulli variables, with expected value $\mathbb{E}\left(\hat{p}\right)=p$, and let $\epsilon>0$. If $p\leq{}(1-2\kappa_{n,\epsilon})/(1+4\kappa_{n,\epsilon})$ for $\kappa_{n,\epsilon}=(2/9n)\ln(1/\epsilon)$, then $\Pr[\hat{p}\geq{}z_{n,\epsilon}(p)]\leq{\epsilon}$ with
\begin{equation}\label{piecewise_1}
z_{n,\epsilon}(p)=\displaystyle{\frac{-b_{n,\epsilon}(p)+\sqrt{b_{n,\epsilon}(p)^{2}-4a_{n,\epsilon}c_{n,\epsilon}(p)}}{2a_{n,\epsilon}}},
\end{equation}
where $a_{n,\epsilon}=1+\kappa_{n,\epsilon}$, $b_{n,\epsilon}(p)=-2p-\kappa_{n,\epsilon}(3-4p)$ and $c_{n,\epsilon}(p)=p^2-\kappa_{n,\epsilon}{}p(6-4p)$.
\end{proposition}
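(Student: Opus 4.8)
The plan is to invert the relaxed additive Chernoff bound established in the preceding proposition. By that result, for every $z\in[p,1]$ we have $\Pr[\hat{p}\geq{}z]\leq{}e^{-nD(z,p)}$ with $D(z,p)=9(z-p)^2/[2(z+2p)(3-z-2p)]$. To force the right-hand side to be at most $\epsilon$ it suffices to pick $z$ with $nD(z,p)\geq\ln(1/\epsilon)$, and the tightest (smallest) such $z$ is the solution of the equality $D(z,p)=\tfrac{1}{n}\ln(1/\epsilon)$. Since $\kappa_{n,\epsilon}=(2/9n)\ln(1/\epsilon)$, we have $\tfrac{1}{n}\ln(1/\epsilon)=\tfrac{9}{2}\kappa_{n,\epsilon}$, so after cancelling the common factor $9/2$ the defining equation becomes the clean form $(z-p)^2=\kappa_{n,\epsilon}(z+2p)(3-z-2p)$.

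First I would justify that taking the equality solution is legitimate and optimal. Because $D(p,p)=0<\tfrac{1}{n}\ln(1/\epsilon)$ while $D(\cdot,p)$ is strictly increasing on $[p,1]$ (its numerator increases and its denominator decreases as $z$ grows), the equation has a unique root $z_{n,\epsilon}(p)>p$; evaluating the Chernoff bound there gives $e^{-nD}=\epsilon$ exactly, hence $\Pr[\hat{p}\geq{}z_{n,\epsilon}(p)]\leq\epsilon$. Next I would expand the equality into a quadratic. Writing $(z+2p)(3-z-2p)=-z^2+(3-4p)z+(6p-4p^2)$ and collecting powers of $z$ yields $a_{n,\epsilon}z^2+b_{n,\epsilon}(p)z+c_{n,\epsilon}(p)=0$ with exactly $a_{n,\epsilon}=1+\kappa_{n,\epsilon}$, $b_{n,\epsilon}(p)=-2p-\kappa_{n,\epsilon}(3-4p)$ and $c_{n,\epsilon}(p)=p^2-\kappa_{n,\epsilon}p(6-4p)$. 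As $a_{n,\epsilon}>0$ and the relevant root is the larger of the two (the one exceeding $p$), it is the $+$ branch of the quadratic formula, which is precisely $z_{n,\epsilon}(p)$ in Eq.~(\ref{piecewise_1}).

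The main obstacle---and the only point where the hypothesis $p\leq(1-2\kappa_{n,\epsilon})/(1+4\kappa_{n,\epsilon})$ enters---is verifying that this root lies in $[p,1]$, the domain on which the relaxed Chernoff bound is valid. The lower bound $z_{n,\epsilon}(p)>p$ follows from the monotonicity argument above. For the upper bound I would evaluate the quadratic at $z=1$: a short computation gives $a_{n,\epsilon}+b_{n,\epsilon}(p)+c_{n,\epsilon}(p)=(1-p)\bigl[(1-p)-2\kappa_{n,\epsilon}(2p+1)\bigr]$. Since the parabola opens upward, its larger root is at most $1$ exactly when this value is nonnegative; as $p<1$ this is equivalent to $1-p\geq2\kappa_{n,\epsilon}(2p+1)$, i.e.\ $p\leq(1-2\kappa_{n,\epsilon})/(1+4\kappa_{n,\epsilon})$, which is the stated hypothesis. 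I would additionally record the elementary check that the vertex $-b_{n,\epsilon}(p)/2a_{n,\epsilon}$ stays below $1$ in this regime, so that it is genuinely the larger root (and not the smaller one) that is bounded by $1$. With $z_{n,\epsilon}(p)\in[p,1]$ thus secured, the relaxed Chernoff bound applies at $z=z_{n,\epsilon}(p)$ and the claim follows.
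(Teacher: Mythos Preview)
Your proof is correct and follows the same approach as the paper: invert the relaxed Chernoff bound by solving the resulting quadratic in $z$, select the larger root, and identify the hypothesis $p\leq(1-2\kappa_{n,\epsilon})/(1+4\kappa_{n,\epsilon})$ with the condition $z_{n,\epsilon}(p)\leq 1$ (the paper states this equivalently as $\epsilon\geq g_{n,p}(1)$). One small slip worth fixing: the denominator $(z+2p)(3-z-2p)$ of $D(z,p)$ does \emph{not} always decrease on $[p,1]$---its vertex in $z$ is at $(3-4p)/2$, which exceeds $1$ for small $p$---so your parenthetical justification of monotonicity is off; however, a direct computation shows $\partial_z D(z,p)$ is proportional to $3(z+2p)(1-2p)+9p\geq 0$ on $[p,1]$, so $D(\cdot,p)$ is indeed increasing and your conclusion stands.
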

\begin{proof}
According to Proposition 1, for all $z\in[p,1]$, $\Pr[\hat{p}\geq{}z]\leq{\epsilon}$ holds if $\epsilon=\exp\{-n{}D(z,p)\}$. Let us consider the function $g_{n,p}(z)=\exp\{-n{}D(z,p)\}$ that maps $z$'s to $\epsilon$'s for arbitrary $n$ and $p$. Since $g_{n,p}(z)$ is decreasing and thus injective in $[p,1]$, it admits an inverse in $g_{n,p}\left([p,1]\right)=\bigl[g_{n,p}(1),1\bigr]$. Namely, for all $\epsilon\geq{}g_{n,p}(1)$, $\Pr[\hat{p}\geq{}z]\leq{\epsilon}$ holds if $z=g^{-1}_{n,p}(\epsilon)$, which is the only $z\in[p,1]$ such that $g_{n,p}(z)=\epsilon$. This is a quadratic equation $a_{n,\epsilon}z^{2}+b_{n,\epsilon}(p){}z+c_{n,\epsilon}(p)=0$ with the coefficients prescribed in the statement, and one can readily show that only the larger root lies in $[p,1]$. Lastly, the claim follows by observing that $\epsilon\geq{}g_{n,p}(1)\Leftrightarrow{}p\leq{}(1-2\kappa_{n,\epsilon})/(1+4\kappa_{n,\epsilon})$.
\end{proof}
Let us now use Proposition 2 to infer a confidence bound for the underlying parameter $p$~\cite{Neyman1,Neyman2}. Analytical inspection of ${dz}_{n,\epsilon}/dp$ shows that ${z}_{n,\epsilon}(p)$ is monotonically increasing in the interval $[0,(1-2\kappa_{n,\epsilon})/(1+4\kappa_{n,\epsilon})]$, and thus it admits an inverse $z^{-1}_{n,\epsilon}$ in ${z}_{n,\epsilon}([0,(1-2\kappa_{n,\epsilon})/(1+4\kappa_{n,\epsilon})])=\left[3\kappa_{n,\epsilon}/(1+\kappa_{n,\epsilon}),1\right]=:I_{n,\epsilon}$. By definition, $p=z^{-1}_{n,\epsilon}(x)$ fulfills $z_{n,\epsilon}(p)=x$. Solving this quadratic equation for $p$ and choosing the relevant root we conclude that, for all $x\in{}I_{n,\epsilon}$,
\begin{equation}
z^{-1}_{n,\epsilon}(x)=\frac{1}{1+4\kappa_{n,\epsilon}}\left(3\kappa_{n,\epsilon}+(1-2\kappa_{n,\epsilon})x-3\sqrt{\kappa_{n,\epsilon}\left(\kappa_{n,\epsilon}+x-x^{2}\right)}\right).
\end{equation} 
Essentially, $z^{-1}_{n,\epsilon}(\hat{p})$ is the statistic we use to construct a confidence bound on $p$, as shown next.
\begin{proposition}
Let $\hat{p}$ be the average of $n$ independent Bernoulli variables, with expected value $\mathbb{E}\left[\hat{p}\right]=p$. Then, for all $\epsilon>0$, $\Pr[\Gamma^-_{n,\epsilon}(\hat{p})\geq{}p]\leq{\epsilon}$, where $\Gamma^-_{n,\epsilon}(\hat{p})=z^{-1}_{n,\epsilon}(\hat{p})$ if $\hat{p}\in{}I_{n,\epsilon}$, and $\Gamma^-_{n,\epsilon}(\hat{p})=-\epsilon$ otherwise.
\end{proposition}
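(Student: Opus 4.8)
The plan is to reduce the claim to Proposition 2 of this appendix, which already controls the probability that $\hat{p}$ exceeds the upper threshold $z_{n,\epsilon}(p)$. First I would dispose of the trivial branch of the piecewise definition: whenever $\hat{p}\notin{}I_{n,\epsilon}$ one has $\Gamma^-_{n,\epsilon}(\hat{p})=-\epsilon<0\leq{}p$, so the failure event $\{\Gamma^-_{n,\epsilon}(\hat{p})\geq{}p\}$ cannot occur there. Hence it suffices to bound $\Pr[\hat{p}\in{}I_{n,\epsilon},\,z^{-1}_{n,\epsilon}(\hat{p})\geq{}p]$, where on $I_{n,\epsilon}$ the statistic coincides with the explicit inverse $z^{-1}_{n,\epsilon}$.

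Next, I would split according to where the true parameter $p$ sits relative to $p^{*}:=(1-2\kappa_{n,\epsilon})/(1+4\kappa_{n,\epsilon})$, the right endpoint of the interval on which $z_{n,\epsilon}$ was shown to be increasing. Recall that $z_{n,\epsilon}$ is a strictly increasing bijection from $[0,p^{*}]$ onto $I_{n,\epsilon}$, so its inverse $z^{-1}_{n,\epsilon}$ takes values precisely in $[0,p^{*}]$. If $p>p^{*}$, then $z^{-1}_{n,\epsilon}(\hat{p})\leq{}p^{*}<p$ for every admissible $\hat{p}$, so the failure event is empty and its probability is zero, trivially at most $\epsilon$. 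This is exactly the regime in which Proposition 2 is silent, and the key observation is that there the confidence bound simply can never reach $p$.

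For the complementary regime $p\leq{}p^{*}$, I would exploit the monotonicity of $z_{n,\epsilon}$ to rewrite the event. Since $z_{n,\epsilon}$ is increasing and $z^{-1}_{n,\epsilon}$ is its genuine inverse on the relevant intervals, applying $z_{n,\epsilon}$ to both sides yields the equivalence $z^{-1}_{n,\epsilon}(\hat{p})\geq{}p\Leftrightarrow{}\hat{p}\geq{}z_{n,\epsilon}(p)$ for $\hat{p}\in{}I_{n,\epsilon}$ and $p\in[0,p^{*}]$. Consequently $\{\hat{p}\in{}I_{n,\epsilon},\,z^{-1}_{n,\epsilon}(\hat{p})\geq{}p\}\subseteq{}\{\hat{p}\geq{}z_{n,\epsilon}(p)\}$, and Proposition 2---whose validity condition $p\leq{}p^{*}$ holds precisely in this regime---gives $\Pr[\hat{p}\geq{}z_{n,\epsilon}(p)]\leq{}\epsilon$. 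Combining the two regimes closes the argument.

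The step I expect to be the main obstacle is not any single inequality but the bookkeeping of domains: one must check carefully that the piecewise definition of $\Gamma^-_{n,\epsilon}$ dovetails with the range $[0,p^{*}]$ of $z^{-1}_{n,\epsilon}$ and with the validity condition $p\leq{}p^{*}$ of Proposition 2, so that the two cases---$p>p^{*}$ trivially zero, $p\leq{}p^{*}$ reduced to Proposition 2---exhaustively and consistently cover all parameter values. The monotonicity of $z_{n,\epsilon}$ established earlier is what makes the equivalence of events clean, sparing one from a direct manipulation of the quadratic-root formula for $z^{-1}_{n,\epsilon}$.
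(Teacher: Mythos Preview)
Your proposal is correct and follows essentially the same approach as the paper's proof: both split on whether $p$ lies above or below $p^{*}=(1-2\kappa_{n,\epsilon})/(1+4\kappa_{n,\epsilon})$, dispose of the branch $\hat{p}\notin I_{n,\epsilon}$ via $\Gamma^{-}_{n,\epsilon}(\hat{p})=-\epsilon<p$, use the monotonicity of $z_{n,\epsilon}$ to convert $\{z^{-1}_{n,\epsilon}(\hat{p})\geq p\}$ into $\{\hat{p}\geq z_{n,\epsilon}(p)\}$, and then invoke Proposition~2. The only cosmetic difference is that you remove the trivial branch before the case split on $p$, whereas the paper does the case split first; the logic and the ingredients are identical.
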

\begin{proof}
For all $\hat{p}$, $\Gamma^-_{n,\epsilon}(\hat{p})\leq{}\Gamma^-_{n,\epsilon}(1)=(1-2\kappa_{n,\epsilon})/(1+4\kappa_{n,\epsilon})$. Hence, the claim holds if $(1-2\kappa_{n,\epsilon})/(1+4\kappa_{n,\epsilon})<p\leq{}1$. Let us now consider $0\leq{}p\leq{}(1-2\kappa_{n,\epsilon})/(1+4\kappa_{n,\epsilon})$. We have 
$\Pr\left[\Gamma^-_{n,\epsilon}(\hat{p})\geq{}p\right]=\Pr\left[\hat{p}\in{}I_{n,\epsilon},\Gamma^-_{n,\epsilon}(\hat{p})\geq{}p\right]=\Pr\left[\hat{p}\in{}I_{n,\epsilon},z^{-1}_{n,\epsilon}(\hat{p})\geq{}p\right]\leq{}\Pr\left[\hat{p}\in{}I_{n,\epsilon},\hat{p}\geq{}z_{n,\epsilon}(p)\right]\leq{}\Pr\left[\hat{p}\geq{}z_{n,\epsilon}(p)\right]\leq{}\epsilon.$ Here, the first equality follows because $\hat{p}\notin{}I_{n,\epsilon}\implies{}\Gamma^-_{n,\epsilon}(\hat{p})=-\epsilon<p$~\cite{subtlety}, the second equality follows because $\hat{p}\in{}I_{n,\epsilon}\implies\Gamma^-_{n,\epsilon}(\hat{p})=z^{-1}_{n,\epsilon}(\hat{p})$, the first inequality follows because $\{\hat{p}\in{}I_{n,\epsilon},z^{-1}_{n,\epsilon}(\hat{p})\geq{}p\}\implies{}\{\hat{p}\in{}I_{n,\epsilon},\hat{p}\geq{}z_{n,\epsilon}(p)\}$ ---as $z^{-1}_{n,\epsilon}(\hat{p})$ is monotonically increasing in $I_{n,\epsilon}$---, the second inequality follows because $\{\hat{p}\in{}I_{n,\epsilon},\hat{p}\geq{}z_{n,\epsilon}(p)\}\implies{}\{\hat{p}\geq{}z_{n,\epsilon}(p)\}$, and the final inequality follows from Proposition 2 above.
\end{proof}

We remark that, ultimately, Propositions 1 to 3 originate from the upper additive Chernoff bound. Importantly, for each of them, there exists an analogous statement arising from the lower additive Chernoff bound instead. In particular, a confidence bound complementary to that of Proposition 3 follows.
\begin{proposition}
Let $\hat{p}$ be the average of $n$ independent Bernoulli variables, with expected value $\mathbb{E}\left(\hat{p}\right)=p$. Then, for all $\epsilon>0$, $\Pr[\Gamma^+_{n,\epsilon}(\hat{p})\leq{}p]\leq{\epsilon}$, where
\begin{equation}\label{upper}
\Gamma^+_{n,\epsilon}(\hat{p})=\frac{1}{1+4\kappa_{n,\epsilon}}\left(3\kappa_{n,\epsilon}+(1-2\kappa_{n,\epsilon})\hat{p}+3\sqrt{\kappa_{n,\epsilon}\left(\kappa_{n,\epsilon}+\hat{p}-\hat{p}^{2}\right)}\right)
\end{equation}
if $\hat{p}\in[0,(1-2\kappa_{n,\epsilon})/(1+\kappa_{n,\epsilon})]$, and $\Gamma^+_{n,\epsilon}(\hat{p})=1+\epsilon$ otherwise.
\end{proposition}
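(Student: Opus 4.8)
The plan is to obtain this statement from the preceding confidence bound (Proposition 3 of this appendix, the lower bound on $p$ via $\Gamma^-_{n,\epsilon}$) by exploiting the reflection symmetry $\hat{p}\mapsto 1-\hat{p}$, $p\mapsto 1-p$. The starting point is the elementary observation that if $\hat{p}$ is the average of $n$ independent Bernoulli variables with mean $p$, then $\hat{p}':=1-\hat{p}$ is the average of the complementary Bernoulli variables, with mean $p'=1-p$. Applying Proposition 3 verbatim to $\hat{p}'$ immediately yields $\Pr[\Gamma^-_{n,\epsilon}(1-\hat{p})\geq 1-p]\leq\epsilon$, which already has the right shape; it only remains to translate this lower confidence statement for $1-p$ into the claimed upper confidence statement for $p$.

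The crucial algebraic step is the identity $\Gamma^+_{n,\epsilon}(\hat{p})=1-\Gamma^-_{n,\epsilon}(1-\hat{p})$. On the main (non-default) branch this is a direct computation: substituting $x=1-\hat{p}$ into the closed form of $\Gamma^-_{n,\epsilon}$ and using $(1-\hat{p})-(1-\hat{p})^2=\hat{p}-\hat{p}^2$ (so the square-root argument is invariant), the affine part $3\kappa_{n,\epsilon}+(1-2\kappa_{n,\epsilon})(1-\hat{p})$ recombines with the prefactor $1/(1+4\kappa_{n,\epsilon})$ and the leading $1$ to reproduce exactly the affine part of $\Gamma^+_{n,\epsilon}(\hat{p})$, while the sign of the square-root term flips from $-$ to $+$, precisely as in Eq.~(\ref{upper}). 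I would then verify that the piecewise domains match under reflection: $\hat{p}\in[0,(1-2\kappa_{n,\epsilon})/(1+\kappa_{n,\epsilon})]$ is equivalent to $1-\hat{p}\in[3\kappa_{n,\epsilon}/(1+\kappa_{n,\epsilon}),1]=I_{n,\epsilon}$, so the active branch of $\Gamma^+$ corresponds exactly to the active branch of $\Gamma^-$; and on the complementary region $\Gamma^-_{n,\epsilon}(1-\hat{p})=-\epsilon$ gives $1-\Gamma^-_{n,\epsilon}(1-\hat{p})=1+\epsilon=\Gamma^+_{n,\epsilon}(\hat{p})$, matching the default value in the statement.

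With the identity in hand the conclusion is immediate: $\Pr[\Gamma^+_{n,\epsilon}(\hat{p})\leq p]=\Pr[1-\Gamma^-_{n,\epsilon}(1-\hat{p})\leq p]=\Pr[\Gamma^-_{n,\epsilon}(1-\hat{p})\geq 1-p]\leq\epsilon$, where the final inequality is Proposition 3 applied to $1-\hat{p}$. I expect the main obstacle to be bookkeeping rather than conceptual, namely checking that the reflection respects both the closed-form branch and the default-value branch consistently, including the boundary where the endpoints $(1-2\kappa_{n,\epsilon})/(1+\kappa_{n,\epsilon})$ and $3\kappa_{n,\epsilon}/(1+\kappa_{n,\epsilon})$ trade places.

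As an alternative that avoids invoking the symmetry, one could mirror Propositions 1--3 of this appendix directly. The relaxed divergence $D(z,p)$ has already been shown to lower-bound the Kullback--Leibler divergence for all admissible $z$, since the underlying logarithmic inequality holds for every $x>-1$; hence the same argument applied to the lower tail gives $\Pr[\hat{p}\leq z]\leq e^{-nD(z,p)}$, from which one extracts a lower threshold $z'_{n,\epsilon}(p)$ on $\hat{p}$ by selecting the smaller root of the governing quadratic, and inverts it to obtain $\Gamma^+_{n,\epsilon}$. In that route the only care needed is to track the correct root and the reversed monotonicity, which is precisely the bookkeeping the symmetry argument packages more cleanly.
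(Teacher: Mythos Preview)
Your proof is correct. The reflection identity $\Gamma^+_{n,\epsilon}(\hat{p})=1-\Gamma^-_{n,\epsilon}(1-\hat{p})$ indeed holds on both branches (the algebra you outline checks out, and the domain correspondence $\hat{p}\in[0,(1-2\kappa)/(1+\kappa)]\Leftrightarrow 1-\hat{p}\in I_{n,\epsilon}$ is exact), so Proposition~3 applied to $1-\hat{p}$ immediately gives the claim.

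This is a genuinely different route from the paper's. The paper does not exploit the symmetry; its proof sketch says only that one should repeat the construction of Propositions~1--3 starting from the \emph{lower} additive Chernoff bound $\Pr[\hat{p}\leq z]\leq e^{-nD(z||p)}$---which is exactly the alternative you describe at the end. Your symmetry argument is more economical: it reuses Proposition~3 wholesale and reduces the work to a single algebraic identity, so no quadratic has to be re-solved and no monotonicity re-verified. The paper's approach, by contrast, is self-contained in the sense that it does not rely on noticing the reflection invariance of $D(z,p)$ (equivalently, of the logarithmic inequality), but at the cost of redoing the entire chain of inversions with the opposite root and the reversed interval. Either way leads to the same $\Gamma^+_{n,\epsilon}$, and your presentation makes transparent why the two confidence bounds are mirror images of each other.
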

\begin{proof}
The proof follows similar lines as that of Proposition 3 above, but using the lower additive Chernoff bound as the starting point.
\end{proof}
Proposition 1 in the main text is the conjunction of Propositions 3 and 4 presented here.\\

\section*{Appendix IV: Optimal confidence bounds for the average parameter of a set of independent Bernoulli variables}\label{IV}
Here, we present the tightest monotonic one-sided confidence intervals for the average parameter of a set of independent Bernoulli variables, originally derived in~\cite{BancalSekatski}.

\begin{theorem}
Let $\hat{p}$ be the average of $n$ independent Bernoulli variables, with expected value $\mathbb{E}\left[\hat{p}\right]=p$. Then, for all $\epsilon\in(0,1/4]$, we have that $\Pr[\mathcal{F}^{-}_{n,\epsilon}(\hat{p})>p]\leq{\epsilon}$ for
\begin{equation}
\mathcal{F}^{-}_{n,\epsilon}(\hat{p})=\left\{
\begin{array}{ll}
0 & \mathrm{if}\hspace{.2cm}n\hat{p}=0, \\
\hat{p}-\displaystyle{\frac{1-\epsilon}{n\left[1-\epsilon^{*}(n\hat{p},n)\right]}} & \mathrm{if}\hspace{.2cm}n\hat{p}>0,\hspace{.2cm}\epsilon^{*}(n\hat{p},n)\leq{}\epsilon\leq{}1, \\
I^{-1}_{\epsilon}(n\hat{p},n-n\hat{p}+1) & \mathrm{if}\hspace{.2cm}n\hat{p}>0,\hspace{.2cm}0\leq{}\epsilon\leq{}\epsilon^{*}(n\hat{p},n),
\end{array} 
\right.
\end{equation}
where $\epsilon^{*}(n\hat{p},n)=I_{(n\hat{p}-1)/n}(n\hat{p},n-n\hat{p}+1)$ and $I^{-1}_{\epsilon}(a,b)$ denotes the inverse regularized incomplete beta function, such that $I_{I^{-1}_{\epsilon}(a,b)}(a,b)=\epsilon$ for
\begin{equation}
I_{x}(a,b)=\frac{\displaystyle{\int_{0}^{x}t^{a-1}(1-t)^{b-1}dt}}{\displaystyle{\int_{0}^{1}t^{a-1}(1-t)^{b-1}dt}}.
\end{equation}
Complementarily, for all $\epsilon\in(0,1/4]$, we have that $\Pr[\mathcal{F}^{+}_{n,\epsilon}(\hat{p})<p]\leq{\epsilon}$ for $\mathcal{F}^{+}_{n,\epsilon}(\hat{p})=1-\mathcal{F}^{-}_{n,\epsilon}(1-\hat{p})$.
\end{theorem}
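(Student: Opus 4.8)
The plan is to build $\mathcal{F}^{-}_{n,\epsilon}$ by the Clopper--Pearson recipe of pivoting a cumulative mass function~\cite{Clopper}, but adapted to the fact that $n\hat{p}=\sum_i X_i$ is a \emph{Poisson binomial} sum (independent but \emph{nonidentical} Bernoullis), not a binomial one. Writing $k=n\hat{p}$, I would first verify that $\mathcal{F}^{-}_{n,\epsilon}$ is nondecreasing in $\hat{p}$, so that by pivoting the guarantee $\Pr[\mathcal{F}^{-}_{n,\epsilon}(\hat{p})>p]\le\epsilon$ becomes a \emph{uniform} control of the upper tail: $\sup\Pr[\sum_i X_i\ge k]\le\epsilon$, where the supremum runs over all parameter configurations $(p_1,\dots,p_n)$ with fixed average $\tfrac1n\sum_i p_i=p$. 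This reduction is the first step, and it concentrates the whole difficulty into evaluating the worst-case tail as a function of the mean $\mu=np$ and then inverting it at level $\epsilon$.

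The crux is therefore an extremal problem for sums of independent Bernoulli variables with prescribed mean, which is classical~\cite{Hoeffding,Mattner,BancalSekatski}. I would invoke it to show that the tightest valid monotone tail, $\bar{F}_k(\mu)$, splits into two regimes: for $\mu\le k-1$ the binomial is extremal, so $\bar{F}_k(\mu)=\Pr[\mathrm{Bin}(n,p)\ge k]$; while on the top interval $\mu\in[k-1,k]$ the governing bound is the linear envelope rising from the binomial value at $\mu=k-1$ up to the degenerate value $1$ at $\mu=k$ (the latter attained by a configuration forcing $\sum_i X_i=k$ surely). The breakpoint tail at $\mu=k-1$ is precisely $\epsilon^{*}(k,n)=\Pr[\mathrm{Bin}(n,(k-1)/n)\ge k]$, which is exactly the threshold separating the two cases in the statement.

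The third step is the explicit inversion, routine once one uses the identity $\Pr[\mathrm{Bin}(n,p)\ge k]=I_p(k,n-k+1)$. For $0<\epsilon\le\epsilon^{*}$ the level $\epsilon$ is reached inside the genuine-binomial regime, so solving $\bar{F}_k(\mu)=\epsilon$ gives the Clopper--Pearson lower limit $p=I^{-1}_{\epsilon}(k,n-k+1)$, which is case~3. For $\epsilon^{*}\le\epsilon\le1$ the level falls on the linear segment, and solving $\epsilon^{*}+(1-\epsilon^{*})(np-k+1)=\epsilon$ for $p$ yields $p=\hat{p}-\frac{1-\epsilon}{n\,[1-\epsilon^{*}(k,n)]}$, which is case~2; both expressions coincide at $\epsilon=\epsilon^{*}$ (each equals $(k-1)/n$), confirming that $\mathcal{F}^{-}_{n,\epsilon}$ is continuous and monotone as the pivoting required. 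The degenerate case $k=0$ gives $\mathcal{F}^{-}=0$ trivially. Finally, the complementary bound follows from the relabeling symmetry $X_i\mapsto 1-X_i$, which sends $p\mapsto 1-p$ and $\hat{p}\mapsto 1-\hat{p}$ and interchanges the roles of the lower and upper bounds, yielding $\mathcal{F}^{+}_{n,\epsilon}(\hat{p})=1-\mathcal{F}^{-}_{n,\epsilon}(1-\hat{p})$.

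The main obstacle is the second step: correctly pinning down the worst-case tail $\bar{F}_k$ over all Poisson binomial configurations subject to the monotonicity constraint, and in particular justifying why the binomial governs below mean $k-1$ whereas the linear (degenerate-limit) envelope, rather than the binomial curve, governs the last interval once the bound is required to be monotone in $k$ simultaneously for all $k$. This coupling across counts is exactly where the restriction $\epsilon\in(0,1/4]$ and the discreteness of $k$ have to be handled with care, and it is also where the monotonicity of $\mathcal{F}^{-}_{n,\epsilon}$ in $\hat{p}$ that underlies the pivoting in the first step must be checked rather than assumed.
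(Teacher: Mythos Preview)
The paper does not supply its own proof of this theorem: Appendix~IV merely \emph{states} the result and attributes it to~\cite{BancalSekatski} (with a precursor in~\cite{Mattner}), so there is nothing in the paper to compare your argument against line by line. Your outline is, in spirit, a reconstruction of the argument in that reference: pivot a monotone confidence bound, reduce the coverage condition to a worst-case Poisson-binomial tail over all parameter vectors with prescribed mean, identify the two extremal regimes (binomial below $\mu=k-1$, degenerate/linear above), and invert via the beta identity. That is indeed the skeleton of the Bancal--Sekatski derivation, and your identification of the breakpoint $\epsilon^{*}(k,n)=I_{(k-1)/n}(k,n-k+1)$ and the linear inversion in case~2 is correct.

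Where your sketch is thin is precisely where you flag it: the extremal step. The claim ``for $\mu\le k-1$ the binomial is extremal'' is not Hoeffding's 1963 inequality~\cite{Hoeffding} but Hoeffding's 1956 result on the number of successes in independent trials (extended by Samuels and others), and even there the statement requires care for non-integer $\mu$ and for the interplay between the pointwise supremum and the \emph{monotone} (Buehler) envelope you need. Similarly, the linear segment on $[k-1,k]$ is not the raw supremum of the tail over Poisson binomials at a single $k$; it arises only after imposing that the resulting confidence bound be nondecreasing in $k$ simultaneously, which is also where the restriction $\epsilon\le 1/4$ enters in~\cite{BancalSekatski}. So your plan is correct in outline, but a self-contained proof would need to import or reprove those extremal and envelope facts rather than ``invoke'' them; the paper sidesteps all of this by citation.
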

To finish with, two observations are in order. On the one hand, note that for any $\Delta>0$, $\left\{\mathcal{F}^{-}_{n,\epsilon}(\hat{p})\geq{}p+\Delta\right\}\implies{}\left\{\mathcal{F}^{-}_{n,\epsilon}(\hat{p})>p\right\}$, such that $\Pr[\mathcal{F}^{-}_{n,\epsilon}(\hat{p})\geq{}p+\Delta]=\Pr[\left(\mathcal{F}^{-}_{n,\epsilon}(\hat{p})-\Delta\right)\geq{}p]\leq{\epsilon}$ in virtue of Theorem 1. In other words, by decreasing $\mathcal{F}^{-}_{n,\epsilon}(\hat{p})$ in any nonzero amount $\Delta$, we reach a looser lower bound that trivially sticks to the sign convention used in this work, \emph{i.e.} contemplating the equality sign in the probability (rather than the strict inequality considered in Theorem 1). Similarly, increasing $\mathcal{F}^{+}_{n,\epsilon}(\hat{p})$ in any nonzero amount allows to meet the desired criterion for the upper bound.

On another note, binomial-like confidence bounds slightly simpler than those of Theorem 1 are provided in~\cite{BancalSekatski2}, which only entail a trivial loss of optimality imperceptible with the sample sizes characteristic of QKD applications.
\section*{Appendix V: Comparative survey of hypergeometric tail bounds}\label{V}
It is known that the pointwise probabilities of the distributions $\mathrm{Hypergeometric}(N,K,n)$ and $\mathrm{Binom}(n,K/N)$ differ at most by $(n-1)/(N-1)$ for any possible outcome~\cite{Holmes}. This being the case, the smaller the sampling fraction $n/N$ is, the tighter the Chernoff bound becomes in the hypergeometric setting. In this section, we go beyond this asymptotic argument and compare our relaxed Chernoff bound with selected concentration inequalities in the literature, characterizing the parameter regimes where our bound dominates. Notably, few concentration results are known for the hypergeometric distribution, and we only consider those susceptible of deriving analytical confidence upper bounds (this leads to the exclusion of e.g. a bound presented in~\cite{Hayashi_1} and various results from~\cite{Greene-thesis,Greene}). Overall, we select six different bounds that meet this criterion: Serfling~\cite{Serfling} and Hush \& Scovel~\cite{Hush} inequalities ---already explored in this work--- and four Bernstein-type~\cite{Bernstein,Bennett} inequalities respectively due to Greene \& Wellner~\cite{Greene}, Bardenet \& Maillard~\cite{Bardenet}, Goldstein \& Islak~\cite{Goldstein} and Chatterjee~\cite{Chatterjee}.

As in the main text, we shall consider $\hat{X}\sim{}\mathrm{Hypergeometric}(N,K,n)$ and introduce $\hat{p}=\hat{X}/n$, such that $\mathbb{E}\left[\hat{p}\right]=p$ with $p=K/N$. Also, we assume $p\leq{}1/2$ and $n/N\leq{}1/2$, which suffices to illustrate the advantage of our relaxed Chernoff bound. In fact, for a fair comparison in the context $n/N>1/2$, a bound complementary to ours should be deployed, obtainable by exploiting elementary symmetries of the hypergeometric distribution (see e.g.~\cite{blog}).\\

\subsection{Hoeffding-type bounds~\cite{Serfling,Hush}}\label{1st_type}
We refer to Serfling and Hush \& Scovel inequalities as Hoeffding-type because they rely on the range of the random variables---\emph{i.e.} the interval $[0,1]$ for our purposes---, in contrast to Bernstein-type inequalities that rely on the variance of the population. Since Hush \& Scovel is not unconditionally tighter than Serfling, we draw separate comparisons of our bound with the two inequalities, starting with Serfling~\cite{refinement}.

\subsubsection{Serfling inequality}
For all $\delta\in[0,p]$, $\Pr\left[\hat{p}\leq{}p-\delta\right]\leq{}\exp\left[-2Nn\delta^{2}/(N-n+1)\right]$ according to Serfling~\cite{refinement2}, and $\Pr\left[\hat{p}\leq{}p-\delta\right]\leq{}\exp\left[-nD(p-\delta,p)\right]$ according to our relaxed Chernoff bound, where $D(x,y)$ denotes the relaxation of the Kullback-Leibler divergence. Comparing the exponents $\nu_{\rm Serfling}=2Nn\delta^{2}/(N-n+1)$ and $\nu_{\rm Chernoff}=nD(p-\delta,p)$, one can easily show that
\begin{equation}\label{comparison}
\frac{\nu_{\rm Chernoff}}{\nu_{\rm Serfling}}=\frac{N-n+1}{N}\frac{1}{4w(1-w)}
\end{equation}
for $w:={}p-\delta/3\in[2p/3,p]$. Setting $f_{n}^{*}=(n-1)/N$, it follows from Eq.~(\ref{comparison}) that $\nu_{\rm Chernoff}/\nu_{\rm Serfling}<1$ ---\emph{i.e.} Serfling outperforms Chernoff--- when
\begin{equation}\label{serfling_dominance}
f_{n}^{*}>1-4w(1-w).
\end{equation}

This condition is extremely restrictive for applications with small sampling fractions. In particular, it is untenable in the random sampling problem of QKD. To see this, note that $w(1-w)\leq{}p(1-p)$ for $w\in{}[2p/3,p]$ and $p\leq{}1/2$, such that a necessary condition to fulfill Eq.~(\ref{serfling_dominance}) is given by $f_{n}^{*}>1-4p(1-p)$. If we conservatively assume that $p<10\%$ in the context of QKD (typical values are in fact much lower than $10\%$), this necessary condition translates into $f_{n}^{*}>64\%$.

\subsubsection{Hush \& Scovel inequality}
Let us now consider Hush \& Scovel inequality. Hush \& Scovel report a relevant advantage over Serfling if $p<\min(n/N,1-n/N)$ or $p>\max(n/N,1-n/N)$, in which case the inequality reads $\Pr\left[\hat{p}\leq{}p-\delta\right]\leq{}\exp\left[-2\alpha_{N,p}(\delta^2{}n^2-1)\right]$ for all $\delta\in[0,p]$ and $\alpha_{N,p}=(N+2)/(Np+1)(N-Np+1)$~\cite{casuistry}. The bound becomes trivial for $\delta{}n\leq{}1$. Hence, considering $\delta{}n>1$ and denoting the exponent by $\nu_{\rm Hush}$, we have that
\begin{equation}\label{comparison2}
\frac{\nu_{\rm Chernoff}}{\nu_{\rm Hush}}=\frac{(Np+1)(N-Np+1)}{(N+2)n}\left(1-\frac{1}{n^2\delta^2}\right)^{-1}\frac{1}{4w(1-w)}\geq{}\frac{(Np+1)(N-Np+1)}{(N+2)n}\frac{1}{4w(1-w)}\approx{}\frac{N}{4n}\frac{p(1-p)}{w(1-w)},
\end{equation}
where the approximation is tight as long as $Np\gg{}1$, which holds for the vast majority of applications. Now, using again that $w(1-w)\leq{}p(1-p)$ for $w\in{}[2p/3,p]$ and $p\leq{}1/2$, it follows that
\begin{equation}\label{1/4}
\frac{\nu_{\rm Chernoff}}{\nu_{\rm Hush}}\gtrapprox{}\frac{N}{4n}
\end{equation}
according to Eq.~(\ref{comparison2}), meaning that sampling fractions above $1/4$ are required for Hush \& Scovel to dominate. We further explore this potential advantage in Sec.~\ref{VI}, in the specific context of QKD. We remark, however, that such large sampling fractions are only acceptable with critically small block sizes, in which case it is important to stress that the numerical tools developed in this work provide a significant edge anyway.

\subsection{Bernstein-type bounds~\cite{Greene,Bardenet}}\label{2nd_type}
Greene \& Wellner~\cite{Greene}, Bardenet \& Maillard~\cite{Bardenet}, Goldstein \& Islak~\cite{Goldstein} and Chatterjee~\cite{Chatterjee} provide Bernstein-type bounds~\cite{Bernstein,Bennett} for sampling without replacement, and direct comparison reveals that Corollary 1 in Greene \& Wellner~\cite{Greene} always yields a tighter exponent than the other three. Notwithstanding, according to the authors, this corollary is subject to the constraint $n/N\leq{}\min(p,1-p)$, which in principle prevents us from using it for the purpose of statistical inference. Here, however, we show that the corollary holds without the constraint as well, and an elaborate argument on this is given in Sec.~\ref{greene_detail}. In fact, the argument there refers to a statement complementary to the corollary, because it is the lower tail rather than the upper tail that concerns us here. This statement reads
\begin{equation}\label{bernstein}
\Pr\left[\hat{p}\leq{}p-\delta\right]\leq{}\exp\left[-\frac{n\delta^{2}/2}{\sigma^{2}(1-f_{n})+\delta/3}\right]
\end{equation}
for $\sigma^{2}=p(1-p)$, $f_{n}=(n-1)/(N-1)$ and all $\delta\in[0,p]$. Denoting the exponent in Eq.~(\ref{bernstein}) as $\nu_{\rm Greene}$, the request $\nu_{\rm Chernoff}/\nu_{\rm Greene}>1$ can be recasted as
\begin{equation}\label{critical}
f_{n}<\frac{2}{3}\frac{\delta}{p}+\frac{1}{9}\frac{\delta^2}{p(1-p)},
\end{equation}
which gives a sufficient condition for the dominance of our bound over all four inequalities mentioned above. Roughly speaking, for a given $p$, the advantage of our bound is conditional on the ratio $f_{n}/\delta$, with lower ratios favoring our bound and larger ratios favoring the considered Bernstein-type inequalities.
The reader is referred to Sec.~\ref{VI} for a detailed assessment of Eq.~(\ref{bernstein}) in the context of QKD.

\section*{Appendix VI: the case for Hush \& Scovel and Greene \& Wellner inequalities}\label{VI}
In this final section, we evaluate the performance of Hush \& Scovel and Greene \& Wellner inequalities in the specific context of QKD. To this end, we use these tools to address the failure probability estimation of the QKD protocols of Sec.~\ref{I} and Sec.~\ref{II}, and complement this with QKD simulations matching Figure 2 in the main text.

\subsection{Hush \& Scovel}\label{hush_detail}
Let us address Hush \& Scovel inequality first. We stick to the natural procedure that leads to Proposition 2 in the main text, according to which a threshold function for the random sampling problem of QKD follows given a confidence upper bound for the hypergeometric population parameter. The derivation of this confidence bound is formally identical to the bulk of Sec.~\ref{III}. To begin with, we conveniently recast the lower tail inequality of Hush \& Scovel~\cite{Hush}.
\begin{proposition}
Let $\hat{X}\sim{}Hypergeometric(N,K,n)$ and $\hat{p}=\hat{X}/n$, such that $\mathbb{E}\left[\hat{p}\right]=p$ with $p=K/N$. Then, for all $z\in[0,p]$, $\Pr[\hat{p}\leq{}z]\leq{}\exp\left\{-2\alpha_{N,p}\left[n^{2}(p-z)^{2}-1\right]\right\}$ for $\alpha_{N,p}=(N+2)/(Np+1)(N-Np+1)$.
\end{proposition}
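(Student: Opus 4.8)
The plan is to obtain the proposition by a one-line change of variables from the native lower-tail form of the Hush \& Scovel inequality, which I already transcribed in Sec.~\ref{1st_type}. In that form the bound reads $\Pr[\hat{p}\leq{}p-\delta]\leq{}\exp[-2\alpha_{N,p}(n^{2}\delta^{2}-1)]$ for all $\delta\in[0,p]$, with $\alpha_{N,p}=(N+2)/[(Np+1)(N-Np+1)]$ already written in terms of the population parameter $p=K/N$. The claimed proposition is nothing but this statement re-parametrized by the threshold value $z$ instead of the deviation $\delta$, so the whole task reduces to a substitution plus a domain check.

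Concretely, I would set $\delta=p-z$, equivalently $z=p-\delta$. Then the event $\{\hat{p}\leq{}p-\delta\}$ coincides with $\{\hat{p}\leq{}z\}$, the squared deviation becomes $\delta^{2}=(p-z)^{2}$, and the exponent $-2\alpha_{N,p}(n^{2}\delta^{2}-1)$ turns into $-2\alpha_{N,p}[n^{2}(p-z)^{2}-1]$, matching the claim verbatim; note that $\alpha_{N,p}$ does not depend on $\delta$, so it is left untouched. For the domain, the map $\delta\mapsto{}z=p-\delta$ is a decreasing bijection carrying $[0,p]$ onto $[0,p]$, so the validity range ``for all $\delta\in[0,p]$'' translates exactly into ``for all $z\in[0,p]$'' as required.

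The only point demanding care is the set of hypotheses under which the native inequality holds. In Sec.~\ref{1st_type} the bound was introduced together with the regime $p<\min(n/N,1-n/N)$ or $p>\max(n/N,1-n/N)$, and I would go back to Ref.~\cite{Hush} to determine whether this restriction is genuinely required for the displayed bound or merely delineates where Hush \& Scovel improves upon Serfling. I expect this to be the main obstacle: settling the precise hypotheses of the cited theorem and, should the restriction be needed, verifying that it is met in the large-sampling-fraction, small-$p$ setting where this inequality is actually deployed (the same setting identified in Sec.~\ref{1st_type}, where $n/N>1/4$ and $p$ is a few percent). Once the hypotheses are pinned down, the recasting itself is immediate.
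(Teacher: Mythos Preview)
Your proposal is correct and matches the paper's own treatment: Proposition~5 is presented there merely as a ``convenient recasting'' of the Hush \& Scovel lower-tail inequality, and the substitution $\delta=p-z$ together with the domain bijection $[0,p]\to[0,p]$ is exactly the content. Your caution about the hypotheses is well placed---the paper, like you, inherits the displayed form of $\alpha_{N,p}$ from the specific case quoted in Sec.~\ref{1st_type} (cf.\ the footnote~\cite{casuistry}, which also drops the native constraint $\delta>2/n$)---so consulting~\cite{Hush} to confirm the exact scope of that form is the right instinct, but there is no additional argument in the paper beyond what you outline.
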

Next, we establish the upper bound on $\hat{p}$ that matches any desired error probability $\epsilon$.
\begin{proposition}
Let $\hat{X}\sim{}Hypergeometric(N,K,n)$ and $\hat{p}=\hat{X}/n$, such that $\mathbb{E}\left[\hat{p}\right]=p$ with $p=K/N$. Also, let $\epsilon>0$, $\tau_{N,\epsilon}=\ln(1/\epsilon)\bigl/{2(N+2)}$ and $p^{+}_{N,n,\epsilon}=\left\{\tau_{N,\epsilon}N^{2}+\sqrt{\tau^{2}_{N,\epsilon}N^{4}+4\left(n^{2}+\tau_{N,\epsilon}N^{2}\right)\Bigl[1+\tau_{N,\epsilon}(N+1)\Bigr]}\right\}\Bigl/{2\left(n^{2}+\tau_{N,\epsilon}N^{2}\right)}$. If $p\geq{}p^{+}_{N,n,\epsilon}$, then $\Pr[\hat{p}\leq{}z_{N,n,\epsilon}(p)]\leq{}\epsilon$ for
\begin{equation}\label{z_hush}
z_{N,n,\epsilon}(p)=p-\frac{1}{n}\sqrt{1+\tau_{N,\epsilon}(Np+1)(N-Np+1)}.
\end{equation}
\end{proposition}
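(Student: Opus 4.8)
The plan is to invert the tail bound of the preceding proposition, in exactly the same fashion as the relaxed-Chernoff confidence bound was obtained in Appendix~III. First I would take the inequality $\Pr[\hat p\le z]\le\exp\{-2\alpha_{N,p}[n^{2}(p-z)^{2}-1]\}$, valid for $z\in[0,p]$, and set its right-hand side equal to the target error $\epsilon$. Taking logarithms in $\exp\{-2\alpha_{N,p}[n^{2}(p-z)^{2}-1]\}=\epsilon$, and substituting $\alpha_{N,p}=(N+2)/[(Np+1)(N-Np+1)]$ together with the definition $\tau_{N,\epsilon}=\ln(1/\epsilon)/[2(N+2)]$, the equation collapses to $n^{2}(p-z)^{2}=1+\tau_{N,\epsilon}(Np+1)(N-Np+1)$. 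Selecting the root compatible with $z\le p$ and solving for $z$ yields precisely $z=z_{N,n,\epsilon}(p)$, so that the bound $\Pr[\hat p\le z_{N,n,\epsilon}(p)]\le\epsilon$ holds \emph{provided} $z_{N,n,\epsilon}(p)$ lies in the admissible interval $[0,p]$.

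The only nontrivial point is therefore to verify this domain condition. The upper endpoint is immediate: the square-root term in $z_{N,n,\epsilon}(p)$ is nonnegative, hence $z_{N,n,\epsilon}(p)\le p$ automatically. The lower endpoint $z_{N,n,\epsilon}(p)\ge 0$ is exactly where the hypothesis $p\ge p^{+}_{N,n,\epsilon}$ enters. I would square the equivalent inequality $p\ge\frac1n\sqrt{1+\tau_{N,\epsilon}(Np+1)(N-Np+1)}$ (both sides nonnegative), expand $(Np+1)(N-Np+1)=N^{2}p-N^{2}p^{2}+N+1$, and rearrange into the quadratic $(n^{2}+\tau_{N,\epsilon}N^{2})p^{2}-\tau_{N,\epsilon}N^{2}p-[1+\tau_{N,\epsilon}(N+1)]\ge 0$ in the variable $p$. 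Since the leading coefficient $n^{2}+\tau_{N,\epsilon}N^{2}$ is positive and the constant term $-[1+\tau_{N,\epsilon}(N+1)]$ is negative, the quadratic has one negative and one positive root; as $p\ge 0$, the operative constraint is $p$ no smaller than the positive root, and the quadratic formula identifies that root as exactly $p^{+}_{N,n,\epsilon}$. Thus $z_{N,n,\epsilon}(p)\ge 0\Leftrightarrow p\ge p^{+}_{N,n,\epsilon}$.

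With $z_{N,n,\epsilon}(p)\in[0,p]$ secured under the stated hypothesis, the claim follows by invoking the preceding proposition with $z=z_{N,n,\epsilon}(p)$. I expect no genuine obstacle: the whole argument is a verbatim analogue of the relaxed-Chernoff inversion carried out in Appendix~III, with the Hush \& Scovel exponent in place of the relaxed Kullback--Leibler divergence. The one step demanding a little care is the bookkeeping of the domain restriction---confirming that the positive root of the auxiliary quadratic is precisely $p^{+}_{N,n,\epsilon}$ and that this, rather than $z_{N,n,\epsilon}(p)\le p$, is the binding constraint---since this is what guarantees that the inverted bound remains meaningful (i.e. $z_{N,n,\epsilon}(p)\ge 0$) instead of degenerating into a vacuous statement.
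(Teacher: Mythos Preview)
Your proof is correct and follows essentially the same approach as the paper: the paper phrases the argument as inverting the increasing function $g_{N,n,p}(z)=\exp\{-2\alpha_{N,p}[n^{2}(p-z)^{2}-1]\}$ on its codomain $[g_{N,n,p}(0),g_{N,n,p}(p)]$, noting that $\epsilon\ge g_{N,n,p}(0)\Leftrightarrow p\ge p^{+}_{N,n,\epsilon}$ and that $\epsilon\le g_{N,n,p}(p)$ is unrestrictive, which is exactly your ``solve for $z$ and verify $z\in[0,p]$'' done in slightly different language. Your explicit reduction of the domain condition $z_{N,n,\epsilon}(p)\ge 0$ to the quadratic inequality with positive root $p^{+}_{N,n,\epsilon}$ simply spells out what the paper leaves implicit.
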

\begin{proof}
The function $g_{N,n,p}$ that maps $z$'s to $\epsilon$'s is increasing and thus invertible in its codomain, $[g_{N,n,p}(0),g_{N,n,p}(p)]$, where $\epsilon\geq{}g_{N,n,p}(0)\Leftrightarrow{}p\geq{}p^{+}_{N,n,\epsilon}$ for the prescribed constant $p^{+}_{N,n,\epsilon}$, and $\epsilon\leq{}g_{N,n,p}(p)$ is unrestrictive. The inverse $g_{N,n,p}^{-1}(\epsilon)=:z_{N,n,\epsilon}(p)$ is the only $z\in[0,p]$ fulfilling $g_{N,n,p}(z)=\epsilon$, which is solved explicitly to yield Eq.~(\ref{z_hush}).
\end{proof}
The study of $dz_{N,n,\epsilon}/dp$ shows that ${z}_{N,n,\epsilon}(p)$ increases in the considered interval $\bigl[p^{+}_{N,n,\epsilon},1\bigr]$, thus admitting an inverse $z^{-1}_{N,n,\epsilon}$ in $z_{N,n,\epsilon}\bigl([p^{+}_{N,n,\epsilon},1]\bigr)=\bigl[0,1-\sqrt{1+\tau_{N,\epsilon}(N+1)}\bigl/{}n\bigr]=:I_{N,n,\epsilon}$. By definition, $p=z^{-1}_{N,n,\epsilon}(x)$ fulfills $z_{N,n,\epsilon}(p)=x$, such that solving this quadratic equation for $p$ and choosing the relevant root yields
\begin{equation}
z^{-1}_{N,n,\epsilon}(x)=\frac{\tau_{N,\epsilon}N^{2}+2n^{2}x+\sqrt{\tau^{2}_{N,\epsilon}N^{2}(N+2)^{2}+4\tau_{N,\epsilon}\Bigl[N^{2}n^{2}x(1-x)+N^{2}+(N+1)n^{2}\Bigr]+4n^{2}}}{2\left(n^{2}+\tau_{N,\epsilon}N^{2}\right)}
\end{equation}
for all $x\in{}I_{N,n,\epsilon}$. As shown next, the statistic $z_{N,n,\epsilon}^{-1}(\hat{p})$ provides the desired confidence bound on $p$.
\begin{proposition}
Let $\hat{X}\sim{}Hypergeometric(N,K,n)$ and $\hat{p}=\hat{X}/n$, such that $\mathbb{E}\left[\hat{p}\right]=p$ with $p=K/N$. Then, for all $\epsilon>0$, $\Pr[\Gamma^+_{N,n,\epsilon}(\hat{p})\leq{}p]\leq{\epsilon}$, where $\Gamma^{+}_{N,n,\epsilon}(\hat{p})=z^{-1}_{N,n,\epsilon}(\hat{p})$ if $\hat{p}\in{}I_{N,n,\epsilon}$, and $\Gamma^+_{N,n,\epsilon}(\hat{p})=1+\epsilon$ otherwise.
\end{proposition}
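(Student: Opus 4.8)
The plan is to reproduce, in the upper-bound direction, the argument already used to derive the $\Gamma^{-}$ confidence bound in Appendix III and its Hush \& Scovel analogue established above. All the analytic work has in fact been done: $z^{-1}_{N,n,\epsilon}$ is the inverse of the increasing map $z_{N,n,\epsilon}$ on $[p^{+}_{N,n,\epsilon},1]$, and $\Gamma^{+}_{N,n,\epsilon}$ equals $z^{-1}_{N,n,\epsilon}$ on $I_{N,n,\epsilon}$ and $1+\epsilon$ off it. Hence the proof reduces to an event-inclusion chain, which I would split according to whether $p$ lies below or above $p^{+}_{N,n,\epsilon}$.

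First I would establish the boundary identity $z^{-1}_{N,n,\epsilon}(0)=p^{+}_{N,n,\epsilon}$. This is immediate from the definition of $p^{+}_{N,n,\epsilon}$ in the preceding proposition: $p^{+}_{N,n,\epsilon}$ is exactly the value of $p$ at which the lower-tail bound becomes tight at the edge $z=0$, so $z_{N,n,\epsilon}(p^{+}_{N,n,\epsilon})=0$ and inversion gives the claim. Because $z^{-1}_{N,n,\epsilon}$ is increasing on $I_{N,n,\epsilon}$ (which starts at $0$) and $\Gamma^{+}_{N,n,\epsilon}=1+\epsilon$ off $I_{N,n,\epsilon}$, it follows that $\Gamma^{+}_{N,n,\epsilon}(\hat{p})\geq{}p^{+}_{N,n,\epsilon}$ for every $\hat{p}$. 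Consequently, in the easy case $0\leq{}p<p^{+}_{N,n,\epsilon}$ the event $\{\Gamma^{+}_{N,n,\epsilon}(\hat{p})\leq{}p\}$ is empty and the claim holds trivially.

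For the remaining case $p^{+}_{N,n,\epsilon}\leq{}p\leq{}1$ I would run the chain
\[
\Pr[\Gamma^{+}_{N,n,\epsilon}(\hat{p})\leq{}p]=\Pr[\hat{p}\in{}I_{N,n,\epsilon},\,z^{-1}_{N,n,\epsilon}(\hat{p})\leq{}p]\leq{}\Pr[\hat{p}\leq{}z_{N,n,\epsilon}(p)]\leq{}\epsilon.
\]
The first equality uses that $\hat{p}\notin{}I_{N,n,\epsilon}$ forces $\Gamma^{+}_{N,n,\epsilon}(\hat{p})=1+\epsilon>p$ (since $p\leq{}1$), whereas on $I_{N,n,\epsilon}$ one has $\Gamma^{+}_{N,n,\epsilon}=z^{-1}_{N,n,\epsilon}$. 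The inequality then applies the increasing map $z_{N,n,\epsilon}$ to both sides of $z^{-1}_{N,n,\epsilon}(\hat{p})\leq{}p$, turning it into $\hat{p}\leq{}z_{N,n,\epsilon}(p)$, after which the constraint $\hat{p}\in{}I_{N,n,\epsilon}$ is simply dropped. The last bound is precisely the preceding proposition, whose hypothesis $p\geq{}p^{+}_{N,n,\epsilon}$ holds in this case.

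The main obstacle is bookkeeping rather than estimation: I must verify that the monotone-inversion step is legitimate, i.e. that both $z^{-1}_{N,n,\epsilon}(\hat{p})$ and $p$ lie in $[p^{+}_{N,n,\epsilon},1]$, the interval on which $z_{N,n,\epsilon}$ is the genuine inverse of $z^{-1}_{N,n,\epsilon}$. This is exactly why the threshold $p^{+}_{N,n,\epsilon}$ must be isolated in the case split, and why the boundary identity $z^{-1}_{N,n,\epsilon}(0)=p^{+}_{N,n,\epsilon}$ is needed to dispatch the complementary case. Once these domain conditions are in place, no further computation is required and the argument mirrors the Chernoff-based proof verbatim.
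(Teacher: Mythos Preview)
Your proposal is correct and follows essentially the same route as the paper's own proof: the same case split at $p^{+}_{N,n,\epsilon}$, the same boundary identity $z^{-1}_{N,n,\epsilon}(0)=p^{+}_{N,n,\epsilon}$ to dispatch the first case, and the same event-inclusion chain invoking monotonicity of $z^{-1}_{N,n,\epsilon}$ and the preceding proposition for the second. The paper merely writes out one extra intermediate step in the chain, but the argument is identical.
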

\begin{proof}
If $p<p^{+}_{N,n,\epsilon}$, the claim holds because $\Gamma^+_{N,n,\epsilon}(\hat{p})\geq{}\Gamma^+_{N,n,\epsilon}(0)=z^{-1}_{N,n,\epsilon}(0)=p_{+}>p$. If $p\geq{}p^{+}_{N,n,\epsilon}$, we have that 
$\Pr\left[\Gamma^+_{N,n,\epsilon}(\hat{p})\leq{}p\right]=\Pr\left[\hat{p}\in{}I_{N,n,\epsilon},\Gamma^+_{N,n,\epsilon}(\hat{p})\leq{}p\right]=\Pr\left[\hat{p}\in{}I_{N,n,\epsilon},z^{-1}_{N,n,\epsilon}(\hat{p})\leq{}p\right]\leq{}\Pr\left[\hat{p}\in{}I_{N,n,\epsilon},\hat{p}\leq{}z_{N,n,\epsilon}(p)\right]\leq{}\Pr\left[\hat{p}\leq{}z_{N,n,\epsilon}(p)\right]\leq{}\epsilon.$ The first equality follows because $\hat{p}\notin{}I_{N,n,\epsilon}\implies{}\Gamma^+_{N,n,\epsilon}(\hat{p})
>p$, and the second one follows from the definition of $\Gamma^+_{N,n,\epsilon}(\hat{p})$. The first inequality follows because $z^{-1}_{N,n,\epsilon}(\hat{p})$ increases in $I_{N,n,\epsilon}$, the second one follows because $\{\hat{p}\in{}I_{N,n,\epsilon},\hat{p}\leq{}z_{N,n,\epsilon}(p)\}\implies{}\{\hat{p}\leq{}z_{N,n,\epsilon}(p)\}$ and the last one follows from Proposition 6.
\end{proof}
In virtue of Proposition 7, Propositions 2 and 3 in the main text hold as well~\cite{slope} for the Hush \& Scovel threshold function
\begin{equation}\label{hush_th}
q^{\rm th}_{N,n,\epsilon}(x)=\frac{N\Gamma^+_{N,n,\epsilon}(x)-nx}{N-n}.
\end{equation}
We recall that this function provides the failure probability estimation of the BBM92 protocol. In particular, the confidence error ---\emph{i.e.} the input $\epsilon$ in the threshold function--- is the necessary upper bound on the failure probability, required to quantify the secrecy parameter. In the decoy-state BB84 protocol, however, this confidence error is just one contribution to the failure probability, to be combined with the statistical errors of the decoy-state bounds. In this regard, if, for simplicity, we assume again a common error probability $\varepsilon$ per statistical bound, one can show that $\Pr\left[\Omega_{\rm TEST},\Omega_{\rm PE}\right]\leq{}16\varepsilon$ for the simple PE test of Eq.~(\ref{simpler}). For this, it suffices to follow the exact same steps of Sec.~\ref{ours}, but taking
\begin{equation}\label{pher_HS}
\phi_{1,\rm Z}^{\rm U}=\frac{\left(n_{1,\rm Z}^{\rm U}+n_{1,\rm X}^{\rm U}\right){\Gamma}^+_{\displaystyle{n_{1,\rm Z}^{\rm U}+n_{1,\rm X}^{\rm U},n_{1,\rm X}^{\rm L},\varepsilon}}\left(\displaystyle{\frac{m_{1,\rm X}^{\rm U}}{n_{1,\rm X}^{\rm L}}}\right)-m_{1,\rm X}^{\rm L}}{n_{1,\rm Z}^{\rm L}}
\end{equation}
rather than Eq.~(\ref{pher_us}).

\subsection{Greene \& Wellner}\label{greene_detail}
Let us now consider the Greene \& Wellner inequality of Eq.~(\ref{bernstein}). In the first place, we show that this equation follows without resorting to the constraint $n/N\leq{}\min(p,1-p)$, where we recall that the parameters correspond to a random variable $\hat{X}\sim{}Hypergeometric(N,K,n)$ with $p=K/N$. Matching Sec.~\ref{V}, we assume $p\leq{}1/2$, in which case the constraint reads $n/N\leq{}p$. From Theorem 1 in~\cite{Hui} (see also Proposition 1 in the earlier work~\cite{Pitman}), there exist $\tilde{n}=\min\{n,K\}$ independent Bernoulli variables $Y_{i}$, with $i=1,\ldots{}\tilde{n}$, such that $\hat{X}=\sum_{i=1}^{\tilde{n}}Y_{i}$. Clearly then, one can equivalently describe $\hat{X}$ as a sum of exactly $n$ independent Bernoulli variables, $\hat{X}=\sum_{i=1}^{n}Y_{i}$, obtained by padding the sequence $(Y_{1},\ldots,Y_{\tilde{n}})$ with $n-\tilde{n}$ Bernoulli variables deterministically equal to 0. This being the case, $\hat{p}=\hat{X}/n$ fulfills the Bernstein inequality, which for the lower tail yields~\cite{Bennett}
\begin{equation}\label{actual_bernstein}
\Pr\left[\hat{p}\leq{}\frac{1}{n}\sum_{i=1}^{n}p_{i}-\delta\right]\leq{}\exp\left[-\frac{n\delta^{2}}{\displaystyle{\frac{2}{n}\sum_{i=1}^{n}p_{i}(1-p_{i})+\frac{2\delta}{3}}}\right]
\end{equation}
for all $\delta\in[0,p]$, with $p_{i}=\mathbb{E}[Y_{i}]$ and $i=1\ldots{}n$. At this point, Eq.~(\ref{bernstein}) follows replacing the expectation (variance) of $\hat{X}$, when interpreted as a sum of independent Bernoulli variables, by its expectation (variance) when interpreted as a hypergeometric random variable. This amounts to substituting $\sum_{i=1}^{n}p_{i}$ by $np$ in the left-hand side of Eq.~(\ref{actual_bernstein}), and $\sum_{i=1}^{n}p_{i}(1-p_{i})$ by $np(1-p)(1-f_{n})$ ---with $f_{n}=(n-1)/(N-1)$--- in the right-hand side of Eq.~(\ref{actual_bernstein}).

The validity of Eq.~(\ref{bernstein}) without the constraint $n/N\leq{}p$ makes it a suitable tool for statistical inference. To be precise, applying the well-known method of Sec.~\ref{III} and Sec.~\ref{hush_detail}, the following confidence upper bound can be derived from Eq.~(\ref{bernstein}).
\begin{proposition}
Let $\hat{X}\sim{}Hypergeometric(N,K,n)$ and $\hat{p}=\hat{X}/n$, such that $\mathbb{E}\left[\hat{p}\right]=p$ with $p=K/N$. Also, let $\pi_{n,\epsilon}=(2/3n)\ln(1/\epsilon)$ and $f_{n}=(n-1)/(N-1)$. Then, for all $\epsilon>0$, $\Pr[\Gamma^+_{N,n,\epsilon}(\hat{p})\leq{}p]\leq{\epsilon}$, where $\Gamma^+_{N,n,\epsilon}(\hat{p})=\gamma^{+}_{N,n,\epsilon}(\hat{p})$ if $\hat{p}\in\left[0,1-\pi_{n,\epsilon}\right]$ and $\Gamma^+_{N,n,\epsilon}(\hat{p})=1+\epsilon$ otherwise, with
\begin{equation}
\gamma^{+}_{N,n,\epsilon}(x)=\frac{3\pi_{n,\epsilon}\left(1-f_{n}\right)+2x+\pi_{n,\epsilon}+\sqrt{\pi_{n,\epsilon}^{2}+3\pi_{n,\epsilon}\left(1-f_{n}\right)\Bigl[3\pi_{n,\epsilon}\left(1-f_{n}\right)+4x(1-x)+2\pi_{n,\epsilon}(1-2x)\Bigr]}}{2\Bigl[1+3\pi_{n,\epsilon}\left(1-f_{n}\right)\Bigr]}.
\end{equation}
\end{proposition}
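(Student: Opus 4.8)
The plan is to follow verbatim the three-step recipe already executed for the Hush \& Scovel inequality in Sec.~\ref{hush_detail} (Propositions 5--7), now starting from the Bernstein-type tail bound of Eq.~(\ref{bernstein}), whose validity without the constraint $n/N\le\min(p,1-p)$ is established just above. Writing $s=\pi_{n,\epsilon}$ and $g=1-f_{n}$ for brevity, the first step is to invert Eq.~(\ref{bernstein}) with respect to the deviation $\delta=p-z$. Setting the right-hand side equal to $\epsilon$ and taking logarithms turns the bound into the quadratic $\delta^{2}-s\,\delta-3s\,g\,p(1-p)=0$, whose positive root yields the lower confidence statistic $z_{N,n,\epsilon}(p)=p-\tfrac12\bigl[s+\sqrt{s^{2}+12\,s\,g\,p(1-p)}\bigr]$. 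Exactly as in Proposition 6, this is valid on the range of $p$ for which $\delta\le p$ (equivalently $z\ge 0$), i.e. for $p\ge p^{+}_{N,n,\epsilon}$ with $p^{+}_{N,n,\epsilon}$ the smallest $p$ obeying $s\,[3g(1-p)+1]\le p$, and it certifies $\Pr[\hat{p}\le z_{N,n,\epsilon}(p)]\le\epsilon$.

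The second step is to invert $z_{N,n,\epsilon}(p)$ with respect to $p$. First I would check, by inspecting $dz_{N,n,\epsilon}/dp$ as in the paragraph preceding Proposition 7, that $z_{N,n,\epsilon}(p)$ is monotonically increasing on $[p^{+}_{N,n,\epsilon},1]$; since $z_{N,n,\epsilon}(p^{+}_{N,n,\epsilon})=0$ and $z_{N,n,\epsilon}(1)=1-s$ (the $\sigma^{2}=0$ endpoint, which is precisely where the domain cutoff $1-\pi_{n,\epsilon}$ originates), the image is $I_{N,n,\epsilon}=[0,1-s]$. Substituting $\delta=p-x$ into the $\delta$-quadratic and regrouping gives a quadratic in $p$, namely $A\,p^{2}+B\,p+C=0$ with $A=1+3sg$, $-B=2x+s+3sg$ and $C=x(x+s)$; solving and retaining the larger root produces $z^{-1}_{N,n,\epsilon}(x)=\gamma^{+}_{N,n,\epsilon}(x)$. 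The numerator's linear part $3sg+2x+s$ and the denominator $2(1+3sg)$ then match the claimed formula by inspection, and the only genuine computation is to confirm that the discriminant $B^{2}-4AC$ collapses to $s^{2}+3sg\bigl[3sg+4x(1-x)+2s(1-2x)\bigr]$. This algebraic simplification is the step I expect to be the main obstacle, though it is routine: after cancelling the $4x^{2}$, $4sx$ and $s^{2}$ cross terms one is left with $s^{2}+sg\bigl[9sg+6s(1-2x)+12x(1-x)\bigr]$, which is the target radicand.

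The final step reproduces the proof of Proposition 7 line by line. I would define the piecewise statistic $\Gamma^{+}_{N,n,\epsilon}(\hat{p})=\gamma^{+}_{N,n,\epsilon}(\hat{p})$ on $\hat{p}\in I_{N,n,\epsilon}=[0,1-s]$ and $\Gamma^{+}_{N,n,\epsilon}(\hat{p})=1+\epsilon$ otherwise, and split on $p$. For $p<p^{+}_{N,n,\epsilon}$ the claim is immediate because $\Gamma^{+}_{N,n,\epsilon}(\hat{p})\ge\gamma^{+}_{N,n,\epsilon}(0)=z^{-1}_{N,n,\epsilon}(0)=p^{+}_{N,n,\epsilon}>p$. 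For $p\ge p^{+}_{N,n,\epsilon}$ I would run the implication chain
\[
\{\Gamma^{+}_{N,n,\epsilon}(\hat{p})\le p\}=\{\hat{p}\in I_{N,n,\epsilon},\,z^{-1}_{N,n,\epsilon}(\hat{p})\le p\}\implies\{\hat{p}\le z_{N,n,\epsilon}(p)\},
\]
using that $\hat{p}\notin I_{N,n,\epsilon}$ forces $\Gamma^{+}_{N,n,\epsilon}(\hat{p})=1+\epsilon>p$, and that the monotone increase of $z^{-1}_{N,n,\epsilon}$ on $I_{N,n,\epsilon}$ transfers $z^{-1}_{N,n,\epsilon}(\hat{p})\le p$ into $\hat{p}\le z_{N,n,\epsilon}(p)$. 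Taking probabilities and invoking the first step closes the bound at $\epsilon$.
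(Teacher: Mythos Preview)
Your proposal is correct and follows exactly the approach the paper indicates: the paper provides no explicit proof of this proposition, merely stating that it is obtained ``applying the well-known method of Sec.~\ref{III} and Sec.~\ref{hush_detail}'' to Eq.~(\ref{bernstein}), and your three steps (invert the tail bound in $\delta$, invert $z_{N,n,\epsilon}$ in $p$, then run the Proposition~7 implication chain) are precisely that method with the algebra filled in. Your quadratic coefficients $A=1+3sg$, $-B=2x+s+3sg$, $C=x(x+s)$ and the discriminant simplification are correct, and the endpoint checks $z_{N,n,\epsilon}(1)=1-\pi_{n,\epsilon}$ and $\gamma^{+}_{N,n,\epsilon}(0)=p^{+}_{N,n,\epsilon}$ confirm the domain $[0,1-\pi_{n,\epsilon}]$ and the choice of root.
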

Plugging the confidence bound of Proposition 8 in Eq.~(\ref{hush_th}), the Greene \& Wellner threshold function follows, which provides the failure probability estimation of the BBM92 protocol via Proposition 3 in the main text. As for the decoy-state BB84 protocol, all considerations made for the Hush \& Scovel threshold function remain valid for Greene \& Wellner's. In particular, assuming a common error probability $\varepsilon$ per statistical bound, and defining $\phi_{1,\rm Z}^{\rm U}$ via Eq.~(\ref{pher_HS}) ---with the confidence upper bound of Proposition 8---, $\Pr\left[\Omega_{\rm TEST},\Omega_{\rm PE}\right]\leq{}16\varepsilon$ follows for the PE test of Eq.~(\ref{simpler}).

\subsection{QKD simulations}\label{QKD_simul}
Figure~\ref{newlineHS} below reproduces Figure 2 in the main text, but incorporating additional lines relying on Hush \& Scovel and Greene \& Wellner inequalities. Specifically, dashed-dotted blue lines are deployed in both cases, using single dots for the former tool and double dots for the latter. In Figure~\ref{newlineHS}b, the new lines use Proposition 1 in the main text for the decoy-state confidence bounds, matching the criterion of the solid blue line for a fair comparison.

\begin{figure}[htbp!]
\centering
 \begin{subfigure}{.5\textwidth}
   \centering
   \includegraphics[width=\columnwidth]{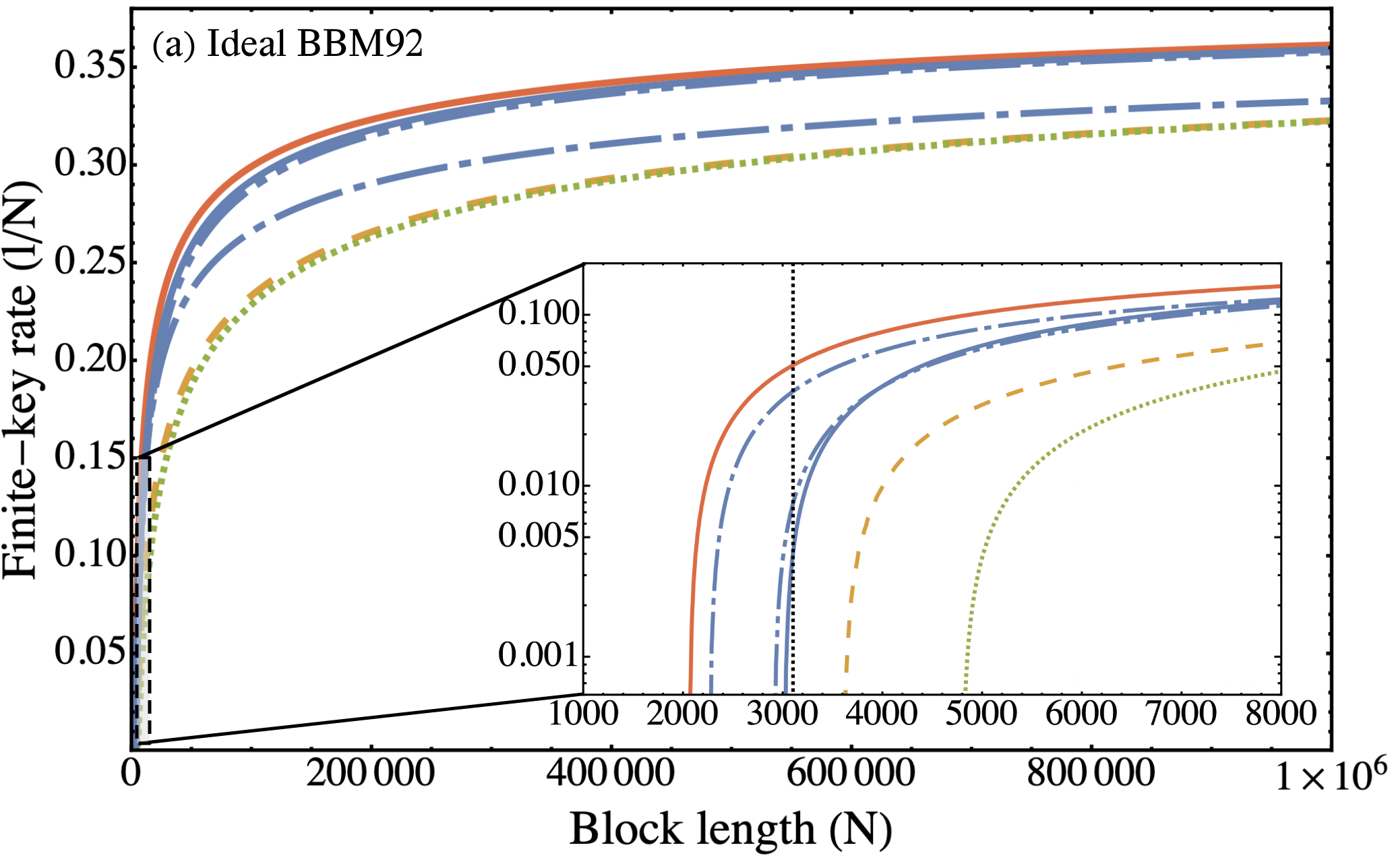}
   \end{subfigure}%
  \hfill
 \begin{subfigure}{.5\textwidth}
   \centering
   \includegraphics[width=\columnwidth]{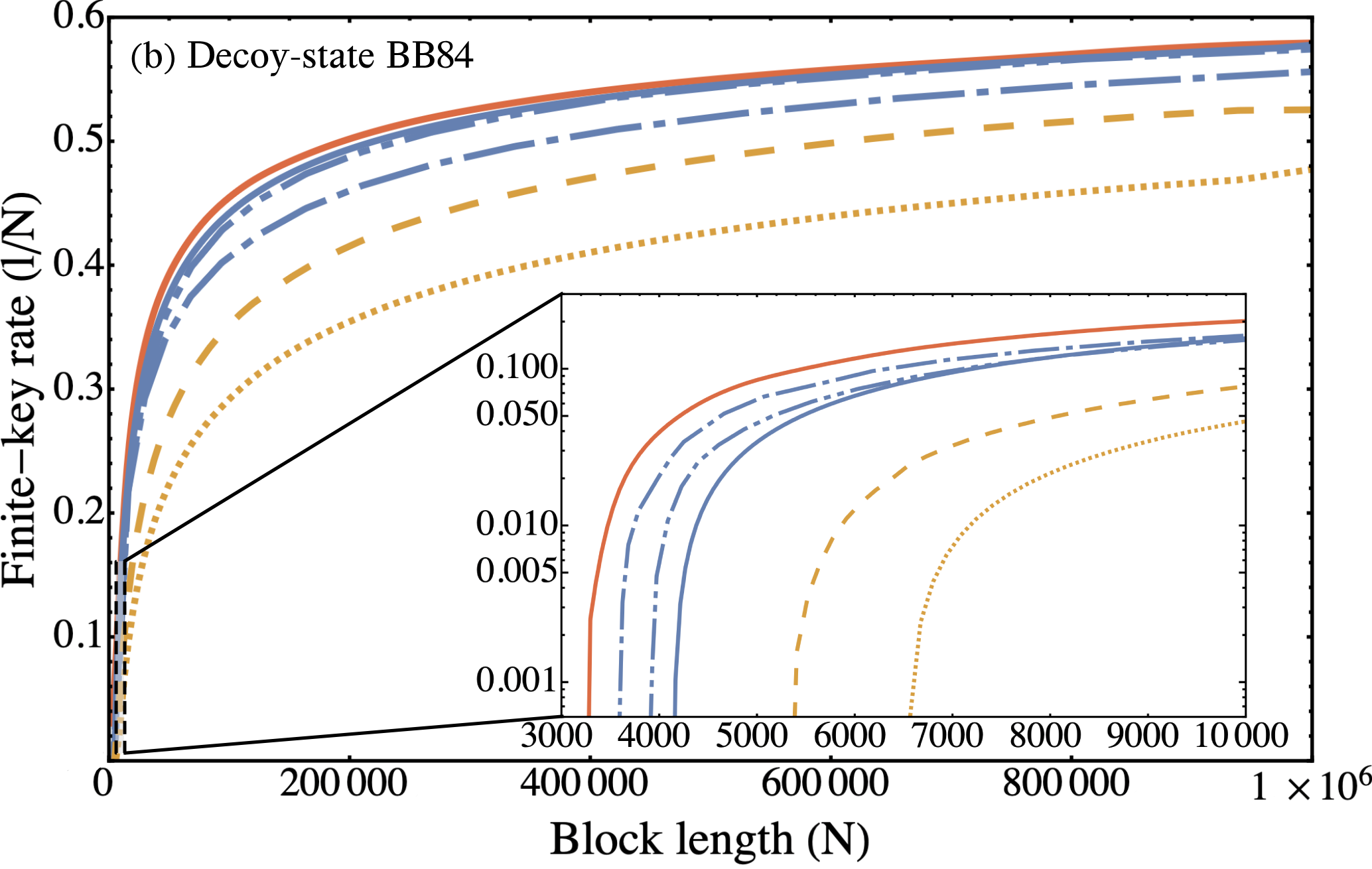}
    \end{subfigure}
    \caption{Finite secret key rate versus data block size. (a) Ideal BBM92 protocol. Dotted green line: Eq.~(\ref{Serfling_threshold}); dashed orange line: Eq.~(\ref{LimEkert}); solid blue line: Eq.~(\ref{threshold}); dashed-dotted blue line, single dots: Eq.~(\ref{hush_th}) via Proposition 7; dashed-dotted blue line, double dots: Eq.~(\ref{hush_th}) via Proposition 8; solid red line: Eq.~(\ref{threshold_CP}). The dotted black vertical line in the inset indicates the block size gathered in the Micius experiment~\cite{Micius}, $N=3100$ bits. (b) Decoy-state BB84 protocol. Dotted orange line: Eq.~(\ref{LimEkert}) plus Eq.~(\ref{hoeffding_pb}); dashed orange line: Eq.~(\ref{LimEkert}) plus Eq.~(\ref{mult_pb}); solid blue line: Eq.~(\ref{threshold}) plus Eq.~(\ref{add_pb}); dashed-dotted blue line, single dots: Eq.~(\ref{hush_th}) via Proposition 7 plus Eq.~(\ref{add_pb}); dashed-dotted blue line, double dots: Eq.~(\ref{hush_th}) via Proposition 8 plus Eq.~(\ref{add_pb}); solid red line: Eq.~(\ref{threshold_CP}) plus Eq.~(\ref{CP_pb}). The considered protocol inputs and experimental settings are specified in the main text.}
    \label{newlineHS}
\end{figure}

According to the insets of the figure, both Hush \& Scovel and Greene \& Wellner inequalities perform better than our relaxed Chernoff bound (solid blue line) if small enough block sizes are considered, as already anticipated in Sec.~\ref{V}. In particular, a substantial advantage is observed for Hush \& Scovel, although always smaller than the one obtained with the numerical methods (solid red line). On the contrary, Hush \& Scovel becomes looser than Greene \& Wellner for moderate-to-large blocks, and our relaxed Chernoff bound performs better than both of them with the considered settings. We note, however, that the advantage with respect to Greene \& Wellner is small in this regime.

\end{document}